\documentclass[3p,preprint]{elsarticle}
\usepackage[utf8]{inputenc}
\usepackage{amsfonts}
\usepackage{amsmath}
\usepackage{subfigure}
\usepackage{tikz}
\usepackage{enumerate}
\usepackage{ifthen}
\usepackage{hyperref}
\usepackage{todonotes}
\usepackage[linesnumbered,vlined]{algorithm2e}

\SetAlgoLined
\SetKw{KwLet}{let}
\SetKw{KwOr}{or}
\SetKwFor{ElseForEach}{else foreach}{do}{end}
\SetKwFor{WithProb}{with probability}{do}{end}

\usetikzlibrary{patterns,decorations.pathreplacing}

\newcommand{\reals}{\mathbb{R}}

\newcommand{\Oh}{\mathrm{O}}
\newcommand{\oh}{\mathrm{o}}
\newcommand{\Exp}{\mathbf{E}}
\newcommand{\Prob}{\mathbf{Pr}}
\newcommand{\Ccal}{\mathcal{C}}
\newcommand{\Ical}{\mathcal{I}}
\newcommand{\Jcal}{\mathcal{J}}

\newcommand{\Vcal}{\mathcal{V}}
\newcommand{\ALG}{\mathrm{ALG}}
\newcommand{\cost}{\mathrm{cost}}
\newcommand{\opt}{\mathrm{opt}}
\newcommand{\poly}{\mathrm{poly}}
\newcommand{\Ind}{\mathbf{1}}
\newcommand{\hits}{\mbox{ $\mathrm{hits}$ }}

\newtheorem{theorem}{Theorem}
\newtheorem{lemma}[theorem]{Lemma}
\newtheorem{proposition}[theorem]{Proposition}
\newdefinition{definition}[theorem]{Definition}
\newproof{proof}{Proof}

\journal{Theoretical Computer Science}

\begin{document}
\begin{frontmatter}

\title{Query Minimization under Stochastic Uncertainty}
\tnotetext[t1]{Partially supported by Icelandic Research Fund grant 174484-051 and by EPSRC grant EP/S033483/1.
A preliminary version of this paper appeared in volume~12118 of LNCS (LATIN 2020), pp. 181--193, 2020. \url{https://doi.org/10.1007/978-3-030-61792-9_15}.
This work started while M.~S.~de Lima and T.~Tonoyan were at Reykjavik University, during a research visit by S.~Chaplick.
Part of the work was developed while M.~S.~de Lima was at the School of Informatics, University of Leicester, United Kingdom.}

\author[maa]{Steven Chaplick}
\ead{s.chaplick@maastrichtuniversity.nl}

\author[ru]{Magnús M. Halldórsson}
\ead{mmh@ru.is}

\author[uol]{Murilo S. de Lima\corref{cor}}
\ead{mslima@ic.unicamp.br}
\cortext[cor]{Corresponding author}

\author[tech]{Tigran Tonoyan}
\ead{ttonoyan@gmail.com}

\address[maa]{Department of Data Science and Knowledge Engineering, Maastricht University, the Netherlands}

\address[ru]{ICE-TCS, Department of Computer Science, Reykjavik University, Iceland}

\address[uol]{Kópavogur, Iceland}

\address[tech]{Computer Science Department, Technion Institute of Technology, Israel}

\begin{abstract}
We study problems with stochastic uncertainty information on intervals for which the precise value can be queried by paying a cost.
The goal is to devise an adaptive decision tree to find a correct solution to the problem in consideration while minimizing the expected total query cost.
We show that, for the sorting problem, such a decision tree can be found in polynomial time.
For the problem of finding the data item with minimum value, we have some evidence for hardness.
This contradicts intuition, since the minimum problem is easier both in the online setting with adversarial inputs and in the offline verification setting.
However, the stochastic assumption can be leveraged to beat both deterministic and randomized approximation lower bounds for the online setting.
\end{abstract}

\begin{keyword}
stochastic optimization \sep query minimization \sep sorting \sep selection \sep online algorithms
\end{keyword}

\end{frontmatter}

\section{Introduction}

Consider the problem of sorting $n$ data items that are updated concurrently by different processes in a distributed system.
Traditionally, one ensures that the data is strictly consistent, e.g., by assigning a master database that is queried by the other processes, or by running a distributed consensus algorithm.
However, those operations are expensive, and we wonder if we could somehow avoid them.
One different approach has been proposed for the TRAPP distributed database by Olston and Widom~\cite{olston2000queries}, and is outlined as follows.
Every update is sent to the other processes asynchronously, and each process maintains an interval on which each data item may lie.
Whenever the precise value is necessary, a query on the master database can be performed.
Some computations (e.g., sorting) can be performed without knowing the precise value of all data items, so one question that arises is how to perform these while minimizing the total query cost.
Another setting in which this type of problem arises is when market research is required to estimate the data input: a coarser estimation can be performed for a low cost, and more precise information can be obtained by spending more effort in research.
The problem of sorting under such conditions, called the {\bf uncertainty sorting problem with query minimization}, was recently studied by Halldórsson and de Lima~\cite{halldorsson21sortingqueries}.

The study of uncertainty problems with query minimization dates back to the seminal work of Kahan~\cite{kahan91queries} and the
TRAPP distributed database system by Olston and Widom~\cite{olston2000queries}, which dealt with simple problems such as computing the minimum and the sum of numerical data with uncertainty intervals.
These results were later generalized for arbitrary query costs and precision levels by Khanna and Tan~\cite{khanna01queries}.
More recently, more sophisticated problems have been studied in this framework, such as geometric problems~\cite{bruce05uncertainty,charalambous13uncertainty}, network discovery~\cite{beerliova06netdiscovery}, shortest paths~\cite{feder07pathsqueires,welz14thesisqueries}, minimum spanning tree and minimum matroid base~\cite{erlebach14mstverification,erlebach16cheapestset,erlebach08steiner_uncertainty,focke20mstexp,megow17mst,merino19matroidverification,welz14thesisqueries}, linear programming~\cite{ryzhov12lpqueries,yamaguchi18ipqueries}, and NP-hard problems such as the knapsack~\cite{goerigk15knapsackqueries}, scheduling~\cite{arantes18schedulingqueries,durr2020scheduling} and traveling salesman problems~\cite{welz14thesisqueries}.
See~\cite{erlebach15querysurvey} for a survey.

The literature describes two kinds of algorithms for this setting.
Though the nomenclature varies, we adopt the following one.
An {\bf adaptive} algorithm may decide which queries to perform based on results from previous queries.
An {\bf oblivious} algorithm, however, must choose the whole set of queries to perform in advance; i.e., it must choose a set of queries that certainly allow the problem to be solved without any knowledge of the actual values.\footnote{Feder {\em et al.} \cite{feder03medianqueries} call an adaptive algorithm an {\bf online} algorithm, and an oblivious algorithm an {\bf offline} algorithm. We understand that both types of algorithms have to deal with the issue of not knowing the data, and in turn can be considered online exploration problems.}

Two main approaches have been proposed to analyze both types of algorithms.
In the first, an oblivious (adaptive) algorithm is compared to a hypothetical optimal oblivious (adaptive) strategy; this is the approach in \cite{feder07pathsqueires,kahan91queries,merino19matroidverification,olston2000queries}.
However, for more complex problems, and in particular for adaptive algorithms, it usually becomes more difficult to understand the optimal strategy.
A second (more robust) approach is competitive analysis, which is a standardized metric for online optimization~\cite{borodin98online_alg}.
In this setting, both oblivious and adaptive algorithms are compared to an {\bf optimum query set}, which we define next.
For a given realization of the actual values, a set of queries is a {\bf feasible query set} if, after querying all intervals in that set, one can find a solution of the underlying problem; an optimal query set is a feasible query set of minimum cost.
An algorithm (either adaptive or oblivious) is {\bf $\alpha$-query-competitive} if its total query cost is at most $\alpha$ times the cost of an optimum query set.
This type of analysis is performed in~\cite{beerliova06netdiscovery,bruce05uncertainty,erlebach16cheapestset,erlebach08steiner_uncertainty,gupta16queryselection,halldorsson21sortingqueries,kahan91queries,megow17mst,welz14thesisqueries}.
For NP-hard problems, since we do not expect to find the ``correct'' solution in polynomial time, there are two approaches in the literature: either we have an objective function which combines query and solution costs (this is how the scheduling problem is addressed in~\cite{durr2020scheduling}), or we have a fixed query budget and the objective function is based only on the solution cost (as for the knapsack problem in~\cite{goerigk15knapsackqueries}).
Another related problem is that of, given a realization of the precise values, how to compute an optimum query set.
This is often called the {\bf verification}~\cite{beerliova06netdiscovery,charalambous13uncertainty,erlebach14mstverification} or {\bf offline}~\cite{halldorsson21sortingqueries} version of the problem.
This may be interesting both for performing experimental evaluation of the algorithms, as for obtaining insight for solving the corresponding oblivious/adaptive problems (as we discuss in Section~\ref{sec:minimum}).

Competitive analysis is, however, rather pessimistic.
In particular, many problems such as minimum, sorting and spanning tree have a deterministic lower bound of 2 and a randomized lower bound of 1.5 for adaptive algorithms, and a simple 2-competitive deterministic adaptive algorithm, even if queries are allowed to return intervals \cite{erlebach08steiner_uncertainty,gupta16queryselection,halldorsson21sortingqueries,megow17mst}.
For the sorting problem, e.g., Halldórsson and de Lima~\cite{halldorsson21sortingqueries} showed that there is essentially one structure preventing a deterministic adaptive algorithm from having a competitive ratio better than~2.

One natural alternative to competitive analysis is to assume stochastic inputs, i.e., that the precise value in each interval follows a known probability distribution, and we want to build a decision tree specifying a priority  ordering for querying the intervals until the correct solution is found, so that the expected total query cost is minimized.\footnote{Note that, unless some sort of nondeterminism is allowed, the stochastic assumption cannot be used to improve the oblivious results, so we focus on adaptive algorithms.}
In this paper, we study the sorting problem and the problem of identifying the data item with minimum value in this setting.

Some literature is devoted to a similar goal of this paper, but we argue that there are some essential differences.
One first line of work consists of the {\bf stochastic probing problem}~\cite{adamczyk16stochasticprobing,goel06optimizationprobes,guha07optimizationprobes,gupta13stochasticprobing,gupta16stochasticprobing,gupta17stochasticprobingsubmodular,singla18probing}, which is a general stochastic optimization problem with queries.
Even though those works presented results for wide classes of constraints (such as matroid and submodular), they differ in two ways from our work.
First, they assume that only elements that are queried can be in a solution of the underlying problem, or that the objective function is based on the expectation of the non-queried elements.
Second, the objective function is either a combination of the solution and query costs, or there is a fixed budget for performing queries.
Since most of these variants are NP-hard~\cite{goel06optimizationprobes}, some papers~\cite{gupta13stochasticprobing,singla18probing} focused on devising approximation algorithms, while others~\cite{goel06optimizationprobes,gupta16stochasticprobing} on bounding the ratio between an oblivious algorithm and an optimal adaptive algorithm (the {\bf adaptive gap}).
Another very close work is that of Welz~\cite[Section~5.3]{welz14thesisqueries} and Maehara and Yamaguchi~\cite{yamaguchi18ipqueries}, who, like us, assume that a solution may contain non-queried items.
Welz presented results for the minimum spanning tree and traveling salesman problems, but under strong assumptions on the probability distributions.
Maehara and Yamaguchi devised algorithms for a wide class of problems, that also yield improved approximation algorithms for some classical stochastic optimization problems.
However, both works focus on obtaining approximate solutions for the underlying problem, while we wish to obtain an exact one, and they only give asymptotic bounds on the number of queries performed, but do not compare this to the expected cost of an optimum query set.
To sum up, our work gives a better understanding on how the stochastic assumption differs from the competitive analysis, since other assumptions are preserved and we use the same metric to analyze the algorithms: minimizing query cost while finding the correct solution.

\paragraph{Our results}
We prove that, for the sorting problem with stochastic uncertainty, we can construct an adaptive decision tree with minimum expected query cost in polynomial time.
We devise a dynamic programming algorithm which runs in time $\Oh(n^3d^2) = \Oh(n^5)$, where~$n$ is the number of uncertainty intervals and~$d$ is the clique number of the interval graph induced by the uncertainty intervals.
We then discuss why simpler strategies fail, such as greedy algorithms using only local information, or relying on witness sets, which is a standard technique for solving query-minimization problems with adversarial inputs~\cite{bruce05uncertainty,erlebach08steiner_uncertainty}.
We also discuss why we believe that the dynamic programming algorithm cannot be improved to $\oh(n^3)$.

Surprisingly, on the other hand, we present evidence that finding an adaptive decision tree with minimum expected query cost for the problem of finding the data item with minimum value is hard, although the adaptive online version (with adversarial inputs) and the offline (verification) version of the problem are rather simple.
We prove that the (locally) optimal decision tree conditioned to first querying the leftmost interval can be computed easily, and that, in a (globally) optimal decision tree, the leftmost interval is queried first or last.
This also implies that, for any subtree of an optimal decision tree, one branch can be calculated easily.
The hard part of the problem occurs when the global optimum does not query the leftmost interval first, and the question becomes how to find the order in which the other intervals are considered in the ``hard branch'' of the decision tree.
We discuss why various heuristics fail in this case.
A simple approximation result with factor $1 + 1/d_1$ for uniform query costs, where~$d_1$ is the degree of the leftmost interval in the interval graph, follows from the adaptive online version with adversarial inputs~\cite{kahan91queries}.
For arbitrary query costs, we show that the stochastic assumption can be used to beat both deterministic and randomized lower bounds for the adaptive online version with adversarial inputs.

\paragraph{Other related work}
One interesting related problem was studied by van der Hoog {\em et al.}~\cite{vanderhoog19ambiguouspoints}: how to preprocess a set of intervals so that the actual numbers can be sorted efficiently after their precise values are revealed.
Ajtai {\em et al.}~\cite{ajtai16sortingnoise} studied an uncertainty variant of the sorting problem in which a comparison between two values may be imprecise, and the goal is to minimize the number of comparisons to sort the values within a given precision.
Braverman and Mossel~\cite{braverman09sortingnoisy} studied the problem of estimating the most probable ordering of a set of values in two models of uncertainty: (1) when comparisons are not reliable, and (2) when a permutation of the original ordering is sampled within a given probability distribution.

Our work falls into the wide area of stochastic optimization~\cite{birge11stochastic}, and in particular multi-stage stochastic optimization~\cite{pflug14stochastic}.
Problems with uncertainty data described by intervals have also been studied under the framework of robust optimization~\cite{aron04robustuncertainmst,kasperski06intervalregret,yaman01robustmstinterval}.
A classical paper on multi-stage robust optimization is~\cite{chen07robust}.
For surveys on robust optimization, see~\cite{bertsimas11robustsurvey,beyer07robsurvey}.

\paragraph{Organization of the paper}
Section~\ref{sec:sorting} is devoted to the sorting problem with stochastic uncertainty, and Section~\ref{sec:minimum} to the problem of finding the minimum data item.
We conclude the paper with future research questions in Section~\ref{sec:future}.

\section{Sorting}
\label{sec:sorting}

The problem is to sort~$n$ numbers $v_1, \ldots, v_n \in \reals$ whose actual values are unknown.
We are given~$n$ open intervals $I_1, \ldots, I_n$ such that $v_i \in I_i = (\ell_i, r_i)$.
We can query interval~$I_i$ by paying a cost~$w_i$, and after that we know the value of~$v_i$.
We want to find a permutation $\pi : [n] \rightarrow [n]$ such that $v_i \leq v_j$ if $\pi(i) < \pi(j)$ by performing a minimum-cost set of queries.
We focus on adaptive algorithms, i.e., we can make decisions based on previous queries.
We are interested in a stochastic variant of this problem in which~$v_i$ follows some known probability distribution on~$I_i$.
The only constraints are that (1)~values in different intervals have independent probabilities, (2)~for any subinterval $(a, b) \subseteq I_i$, we can calculate $\Prob[v_i \in (a, b)]$ in constant time\footnote{This can be achieved, e.g., by having $\Oh(1)$-time access to the values of the cumulative distribution function (CDF) $F_i$ of the corresponding probability distribution on the $\Oh(n)$ endpoints of all input intervals that are in $I_i$ (we only work with sub-intervals formed by such points). Since such a pre-computation is necessary, stand-alone, and orthogonal to our main topic, we leave it out of the scope of this paper.}, and (3)~the known information on the probability distribution cannot be used to shorten the endpoints of any uncertainty interval\footnote{This can be achieved by assuming that there is a value $0 < \epsilon' < r_i - \ell_i$ such that, for any interval $I_i$ and every $0 < \epsilon \leq \epsilon'$, we have that $\Prob[v_i \in (\ell_i, \ell_i + \epsilon)], \Prob[v_i \in (r_i - \epsilon, r_i)] > 0$. This is natural for continuous probability distributions. Note that this allows regions with probability zero in the middle of the interval. Since we assume that the intervals are open, this assumption can only be satisfied by a discrete probability distribution if the distribution support has infinitely many points; in Section~\ref{sec:probboundary}, we discuss how to modify the algorithm to allow closed intervals, and thus discrete distributions with finite support.}.
The goal is to devise a strategy (i.e., a decision tree) to query the intervals so that the expected query cost is minimized.
More precisely, this decision tree must tell us which interval to query first and, depending on where its value falls, which interval to query second, and so on, until we have enough information to find the permutation $\pi$.

\begin{definition}
\label{def:dep}
Two intervals~$I_i$ and~$I_j$ such that $r_i > \ell_j$ and $r_j > \ell_i$ are {\bf dependent}. Two intervals that are not dependent are {\bf independent}.
\end{definition}

The following lemma and proposition are proved in~\cite{halldorsson21sortingqueries}.
The lemma tells us that we have to remove all dependencies in order to be able to sort the numbers.

\begin{lemma}[\cite{halldorsson21sortingqueries}]
\label{lemma:decideind}
The relative order between two intervals can be decided without querying either of them if and only if they are independent.
\end{lemma}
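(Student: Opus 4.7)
The plan is to prove each direction of the biconditional separately. The forward direction (independence implies decidability without queries) is immediate: if $I_i$ and $I_j$ are independent, then without loss of generality $r_i \leq \ell_j$, whence $v_i < r_i \leq \ell_j < v_j$ holds for every realization, so the order is determined by the interval endpoints alone, with no query needed.

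For the reverse direction, I would argue contrapositively and use the distributional assumption~(3) to show that, under dependency, both possible orders between $v_i$ and $v_j$ occur with positive probability; hence no query-free rule can be correct on every realization. Dependency yields $r_i > \ell_j$ and $r_j > \ell_i$, so I would pick $\epsilon > 0$ small enough (and at most the $\epsilon'$ supplied by assumption~(3)) that $\ell_i + \epsilon < r_j - \epsilon$ and $\ell_j + \epsilon < r_i - \epsilon$; both choices are feasible because the strict inequalities $\ell_i < r_j$ and $\ell_j < r_i$ follow directly from dependency.

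With such an $\epsilon$, the joint event $\{v_i \in (\ell_i, \ell_i + \epsilon),\; v_j \in (r_j - \epsilon, r_j)\}$ forces $v_i < v_j$, while the joint event $\{v_i \in (r_i - \epsilon, r_i),\; v_j \in (\ell_j, \ell_j + \epsilon)\}$ forces $v_j < v_i$. By assumption~(3) each of the four one-dimensional events has positive probability, and by the independence assumption~(1) the two joint events do as well. So realizations of both orders occur with positive probability, and no query-free decision rule can be correct on both. The only mildly delicate step is the careful choice of $\epsilon$ that places the two concentration events on opposite sides of the overlap; given the strict dependency inequalities this is routine, so I do not expect a genuine obstacle.
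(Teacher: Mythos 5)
The paper itself does not prove this lemma; it cites~\cite{halldorsson21sortingqueries} for it, so there is no in-paper proof to compare against. Your argument is correct in both directions, and the forward direction is exactly the standard one: independence gives $r_i \le \ell_j$ (WLOG), so $v_i < r_i \le \ell_j < v_j$ for every realization.

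For the reverse direction you take a probabilistic detour that works but is heavier than needed. The lemma is a purely combinatorial statement about uncertainty intervals: ``can be decided'' means the relative order is the same across all realizations consistent with the intervals, so it suffices to exhibit two realizations producing opposite orders. Given dependence ($r_i > \ell_j$ and $r_j > \ell_i$), you can directly pick $v_i$ close to $\ell_i$ and $v_j$ close to $r_j$ (giving $v_i < v_j$ since $\ell_i < r_j$), and separately $v_i$ close to $r_i$ and $v_j$ close to $\ell_j$ (giving $v_j < v_i$ since $\ell_j < r_i$); both choices lie in the open intervals. Invoking assumption~(3) and the independence of the distributions to argue ``both orders occur with positive probability'' proves a stronger claim than the lemma asks for, and it ties a distribution-free structural fact to the stochastic model, which is unnecessary. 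The one place you gloss over is the bookkeeping on $\epsilon$: besides the two inequalities you name, you also need $\epsilon < r_i - \ell_i$ and $\epsilon < r_j - \ell_j$ so that the concentration windows actually lie inside the intervals; this is implied by $\epsilon \le \epsilon'$ in assumption~(3), but in the distribution-free version of the argument you would simply take $\epsilon$ smaller than all four positive quantities. None of this is a genuine gap --- just extra machinery and a small unstated constraint.
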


\begin{proposition}[\cite{halldorsson21sortingqueries}]
\label{fact:offnecessary}
Let~$I_i$ and~$I_j$ be intervals with actual values~$v_i$ and~$v_j$. 
If $v_i \in I_j$ (and, in particular, when $I_i\subseteq I_j$), then~$I_j$ is queried by every feasible query set.
\end{proposition}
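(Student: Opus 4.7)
The plan is to prove the contrapositive: if a query set $Q$ omits $I_j$, then $Q$ cannot be feasible for the realization $(v_1,\ldots,v_n)$. I will do this by exhibiting an alternative realization $(v'_1,\ldots,v'_n)$ that is indistinguishable from $(v_1,\ldots,v_n)$ given only the information produced by $Q$ (i.e., $v'_k=v_k$ for every $k\in Q$ and $v'_k\in I_k$ for every $k\notin Q$), yet that induces a different relative order of $v_i$ and $v_j$. By the definition of a feasible query set, this contradicts the assumed feasibility of $Q$ and forces $I_j\in Q$.

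The key structural observation is that $v_i\in I_j$ together with the openness of the intervals forces $v_i$ to lie strictly between $\ell_j$ and $r_j$. So there is room on both sides of $v_i$ inside $I_j$, which is precisely what I need in order to move $v_j$ across $v_i$. As a side remark, combining $v_i\in I_i\cap I_j$ gives $r_i>\ell_j$ and $r_j>\ell_i$, so by Definition~\ref{def:dep} the intervals $I_i$ and $I_j$ are dependent; Lemma~\ref{lemma:decideind} then says that at least one of $I_i,I_j$ must be queried, but this proposition needs the sharper statement that it has to be $I_j$.

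The construction is then essentially immediate. Assume without loss of generality that $v_i\le v_j$ in the actual realization (the reverse case is symmetric and uses $(v_i, r_j)$ instead). Choose any $v'_j\in(\ell_j,v_i)$, a nonempty open subinterval of $I_j$ since $\ell_j<v_i$, and set $v'_k=v_k$ for all $k\ne j$. Then $v'$ is consistent with every query in $Q$, because $I_j$ was never queried and $v'_j\in I_j$; but now $v'_j<v'_i$, flipping the order of this pair relative to $v$, and hence changing the sorted output. This contradicts feasibility of $Q$.

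I do not anticipate a real obstacle: the argument is a one-shot perturbation. The only subtlety worth flagging is that it relies on the intervals being open so that $v_i$ is an interior point of $I_j$; if the setting were extended to allow closed intervals, the boundary cases $v_i=\ell_j$ or $v_i=r_j$ would have to be excluded separately, presumably via the positivity-near-the-endpoints assumption stated in the setup.
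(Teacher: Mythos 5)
Your argument is the standard indistinguishability (adversary) proof, and it is essentially correct; the paper itself does not include a proof but cites \cite{halldorsson21sortingqueries} for this proposition, where the same type of perturbation argument is used. One small point worth tightening: when $v_i = v_j$ the original realization admits both orderings of $i$ and $j$, so replacing $v_j$ by a value $v'_j \in (\ell_j, v_i)$ does not by itself rule out a permutation $\pi$ with $\pi(j)<\pi(i)$ being simultaneously correct for $v$ and $v'$. To close this, exhibit both alternative realizations $v'_j \in (\ell_j, v_i)$ and $v''_j \in (v_i, r_j)$ (both nonempty since $v_i$ is interior to $I_j$): these force $\pi(j)<\pi(i)$ and $\pi(i)<\pi(j)$ respectively, yet are indistinguishable from $v$ given $Q$, so no single output permutation can be correct. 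With that fix the proof is complete and fully general.
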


Note that the dependency relation defines an interval graph, where we have a vertex for each interval, and two vertices are adjacent if the corresponding intervals intersect~\cite{lekkeikerker62interval}.
Prop.~\ref{fact:offnecessary} implies that we can immediately query any interval containing another interval, hence we may assume a proper interval graph (that is, without nested interval pairs)~\cite{roberts69indifference}.
We may also assume the graph is connected, since the problem is independent for each component, and that there are no single-point intervals, as they would give a non-proper or disconnected graph.

\subsection{An optimal algorithm}
\label{sec:sortingdp}

We describe a dynamic programming algorithm to solve the sorting problem with stochastic uncertainty.
Since we have a proper interval graph, we assume intervals are in the natural total order, with $\ell_1 < \cdots < \ell_n$ and $r_1 < \cdots < r_n$.
We also pre-compute the {\bf regions} $S_1, \ldots, S_t$ defined by the intervals, where $t \leq 2n -1$.
A region is the interval between two consecutive points in the set $\bigcup_{i = 1}^n \{\ell_i, r_i\}$; we assume that the regions are ordered.
We write $S_x = (a_x, b_x)$ with $a_x < b_x$, and we denote by $\Ical_x(y, z) = \{i : S_x \subseteq I_i \subseteq (a_y, b_z) \}$ the indices of the intervals totally contained in $(a_y, b_z)$ that contain $S_x$.
For simplicity we assume that, for any interval~$I_i$ and any region~$S_x$, $\Prob[v_i = a_x] = \Prob[v_i = b_x] = 0$; this is natural for continuous probability distributions, and for discrete distributions we may slightly perturb the distribution support so that this is enforced (we give more detail in Section~\ref{sec:probboundary}).
Since the dependency graph is a connected proper interval graph, we can also assume that each interval contains at least two regions.

Before explaining the recurrence, we first examine how Prop.~\ref{fact:offnecessary} reduces the space of feasible query sets with an example.
In Figure~\ref{fig:sort1}, suppose we first decide to query~$I_3$ and its value falls in region~$S_5$.
Due to Prop.~\ref{fact:offnecessary}, all intervals that contain~$S_5$, namely~$I_2$ and~$I_4$, have to be queried as well.
In Figure~\ref{fig:sort2}, we assume that~$v_2$ falls in~$S_3$ and~$v_4$ falls in~$S_6$.
This forces us to query~$I_1$ but also implies that $I_5$ can be left unqueried.
Therefore, each time we approach a subproblem by first querying an interval~$I_i$ whose value falls in region~$S_x$, we are forced to query all other intervals that contain $S_x$, and so on in a cascading fashion, until we end up with subproblems that are independent of current queried values.
To find the best strategy, we must pick a first interval to query, and then recursively calculate the cost of the best strategy, depending on the region in which its value falls.
Here, the proper interval graph can be leveraged by having the cascading procedure follow the natural order of the intervals.

\begin{figure}[t]
  \centering
  \subfigure[]{\label{fig:sort1}
   \begin{tikzpicture}[thick, scale=0.4]
    \fill[gray!20] (4, 0) -- (5, 0) -- (5, 6) -- (4, 6) -- cycle;    

    \draw (0, 5) node[anchor=east,xshift=0.7mm]{$I_1$} -- (4, 5);
    \draw (1, 4) node[anchor=east,xshift=0.7mm]{$I_2$} -- (5, 4);
    \draw[very thick] (2, 3) node[anchor=east,xshift=0.7mm]{$I_3$} -- (7, 3);
    \draw (3, 2) -- (8, 2) node[anchor=west,xshift=-0.7mm]{$I_4$};
    \draw (6, 1) -- (9, 1) node[anchor=west,xshift=-0.7mm]{$I_5$};
    
    \draw[dotted] (0, 0) -- (0, 6);
    \draw[dotted] \foreach \x in {1, 2, ..., 9} {
      (\x, 0) node[anchor=east,xshift=1.3mm,yshift=-2.5mm]{\ifthenelse{\x=5}{$S_x$}{}} -- (\x, 6) node[anchor=east,xshift=1.3mm,yshift=2.5mm]{$S_\x$}
    };

    \draw [decoration={brace, mirror, raise=0.1cm},decorate,white] (2, -1) -- (6, -1) node [pos=0.5,anchor=north,yshift=-0.2cm] {cascading area};
   
    \fill[black] (4.5, 3) circle (0.15cm);
   \end{tikzpicture}
  }\quad
  \subfigure[]{\label{fig:sort2}
   \begin{tikzpicture}[thick, scale=0.4]
    \fill[gray!20] (2, 0) -- (3, 0) -- (3, 6) -- (2, 6) -- cycle;    
    \fill[gray!20] (5, 0) -- (6, 0) -- (6, 6) -- (5, 6) -- cycle;    

    \draw (0, 5) node[anchor=east,xshift=0.7mm]{$I_1$} -- (4, 5);
    \draw[very thick] (1, 4) node[anchor=east,xshift=0.7mm]{$I_2$} -- (5, 4);
    \draw[dashed] (2, 3) node[anchor=east,xshift=0.7mm]{$I_3$} -- (7, 3);
    \draw[very thick] (3, 2) -- (8, 2) node[anchor=west,xshift=-0.7mm]{$I_4$};
    \draw (6, 1) -- (9, 1) node[anchor=west,xshift=-0.7mm]{$I_5$};
    
    \draw[dotted] (0, 0) -- (0, 6);
    \draw[dotted] \foreach \x in {1, 2, ..., 9} {
      (\x, 0) node[anchor=east,xshift=1.5mm,yshift=-2.5mm]{\ifthenelse{\x=1}{$S_y$}{}\ifthenelse{\x=3}{$S_{z'}$}{}\ifthenelse{\x=5}{$S_x\ $}{}\ifthenelse{\x=6}{$S_{y'}$}{}\ifthenelse{\x=9}{$S_z$}{}} -- (\x, 6) node[anchor=east,xshift=1.3mm,yshift=2.5mm]{$S_\x$}
    };
    
    \draw [decoration={brace, mirror, raise=0.1cm},decorate] (2, -1) -- (6, -1) node [pos=0.5,anchor=north,yshift=-0.2cm] {cascading area};
   
    \fill[black] (4.5, 3) circle (0.15cm);
    \fill[black] (5.5, 2) circle (0.15cm);
    \fill[black] (2.5, 4) circle (0.15cm);
   \end{tikzpicture}
  }
  \caption{A simulation of the querying process for a fixed realization of the values. \subref{fig:sort1}~Querying~$I_3$ first and assuming $v_3 \in S_5$. \subref{fig:sort2}~Assuming $v_2 \in S_3$ and $v_4 \in S_6$.}
  \label{fig:sort}
\end{figure}
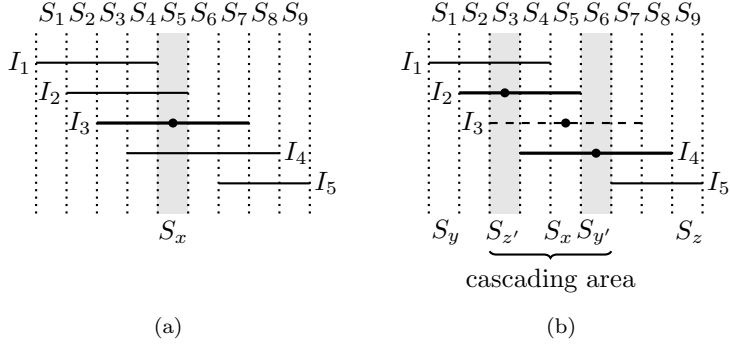

We solve the problem by computing three tables.
The first table, $M$, is indexed by two regions $y, z \in \{1, \ldots, t\}$, and $M[y, z]$ is the minimum expected query cost for the subinstance defined by the intervals totally contained in $(a_y, b_z)$.
Thus, the expected query cost of an optimum decision tree for the whole problem is $M[1, t]$.
To compute $M[y, z]$, we suppose the first interval in $(a_y, b_z)$ that is queried by the optimum decision tree is~$I_i$.
Then, for each region $S_x \subseteq I_i$, when $v_i \in S_x$, we are forced to query every interval~$I_j$ with $j \in \Ical_x(y, z)$ and this cascades, forcing other intervals to be queried depending on where~$v_j$ falls.
So we assume that, for all $j \in \Ical_x(y, z)$, $v_j$~falls in the area defined by regions $z', z' + 1, \ldots, y' -1, y'$, with $z' \leq x \leq y'$, and that this area is minimal (i.e., some point is in~$S_{z'}$, and some point is in~$S_{y'}$).
We call this interval $(a_{z'}, b_{y'})$ the {\bf cascading area} of~$I_i$ in~$\Ical_x(y, z)$.
In Figure~\ref{fig:sort2}, we have $i = 3$, $x = 5$, $z' = 3$ and $y' = 6$.
As the dependency graph is a proper interval graph, the remaining intervals (which do not contain~$S_x$) are split in two independent parts, and we compute the resulting expected query cost using two tables, $L$~and~$R$, which we describe next.
So the recurrence for~$M[y, z]$ is
\begin{displaymath}
\left\{\begin{array}{l}
  0, \qquad \mbox{if } (a_y, b_z) \mbox{ contains less than $2$ intervals; otherwise,} \\
\underbrace{\displaystyle\min_{I_i \subseteq (a_y, b_z)}}_{\begin{array}{c}
  \mbox{\scriptsize first interval}\vspace{-0.1cm}\\
  \mbox{\scriptsize to query}
\end{array}}
\underbrace{\displaystyle\sum_{S_x \subseteq I_i} \Prob[v_i \in S_x]}_{\mbox{\scriptsize where point $v_i$ falls}} \cdot \Bigl(
  \underbrace{\displaystyle\sum_{j \in \Ical_x(y, z)} w_j}_{\mbox{\scriptsize cost of cascading}}
  + \displaystyle\sum_{\substack{z' \leq x \\ y' \geq x}} \underbrace{p(y, z, i, x, z', y')}_{\mbox{\scriptsize cascading area}}
 \cdot \underbrace{\Bigl( \begin{array}{l}
    L[y, z', \min \Ical_x(y, z)] + \\
    + R[y', z, \max \Ical_x(y, z)] 
  \end{array} \Bigr)}_{\mbox{\scriptsize cost of left/right subproblems}}
\Bigr),
 \end{array}\right.
\end{displaymath}
where $p(y, z, i, x, z', y')$ is the probability that $(a_{z'}, b_{y'})$ is the cascading area of~$I_i$ in~$\Ical_x(y, z)$.
We omit how to calculate this probability.

The definitions of~$L$ and~$R$ are symmetric, so we focus on $L$.
For region indices $y,z$, let~$I_{j'}$ be the leftmost interval totally contained in $(a_y, b_z)$.
Now, $L[y, z', j]$ is the minimum expected query cost of solving the subproblem consisting of intervals $I_{j'}, I_{j'+1}, \ldots, I_{j-1}$, assuming that a previously queried point lies in the region~$S_{z'}$.
We ensure that~$z'$ is the leftmost region in $(a_y, b_z)$ that contains a queried point, so that we query all intervals that contain some point.
For example, in Figure~\ref{fig:sort2}, after querying~$I_2$,~$I_3$ and~$I_4$, the left subproblem has $z' = 3$ and $j = 2$.
It holds that~$L$ can be calculated in the following way.
If no interval before~$I_j$ contains~$S_{z'}$, then the cascading is finished and we can refer to table~$M$ for regions $y, y+1, \ldots, z' -1$.
Otherwise $I_{j-1}$ must contain~$S_{z'}$, we query it, and either~$v_{j-1}$ falls to the right of~$a_{z'}$ and we proceed to the next interval, or~$v_{j-1}$ falls in a region~$S_k$ with $k < z'$, and we proceed to the next interval with the leftmost queried point now being  in~$S_k$.
Thus, we have
\begin{displaymath}
 L[y,z',j]= \left\{\begin{array}{ll}
  M[y, z'-1], & \mbox{if } j \leq 1 \mbox{ or } \ell_{j-1} < a_y \mbox{ or } I_{j-1} \not\supseteq S_{z'} \\
  w_{j-1} + \displaystyle\sum_{S_k \subseteq I_{j-1}} \Prob[v_{j-1} \in S_k] \cdot L[y, \min(k, z'), j-1], & \mbox{otherwise.}
 \end{array}\right.
\end{displaymath}
We illustrate this in Figure~\ref{fig:sortleft}.
In Figure~\ref{fig:sortleft1}, the subproblem contains $I_{j-1}, I_{j-2}, \ldots$, and the leftmost queried point is in~$S_{z'}$.
Since $S_{z'} \subseteq I_{j-1}$, we query~$I_{j-1}$ and assume~$v_{j-1}$ falls in a region~$S_k$.
In Figure~\ref{fig:sortleft2}, we have that $k \geq z'$, so we recurse on $L[y, z', j-1]$; this will recurse on $M[y, z'-1]$ in its turn, since $S_{z'} \not\subseteq I_{j-2}$.
In Figure~\ref{fig:sortleft3}, we have that $k < z'$, so we recurse on $L[y, k, j-1]$, which in its turn will have to query~$I_{j-2}$.

\begin{figure}[t]
  \centering
  \subfigure[]{\label{fig:sortleft1}
   \begin{tikzpicture}[thick, scale=0.38]
    \fill[gray!20] (5, 0) -- (6, 0) -- (6, 6) -- (5, 6) -- cycle;    

    \draw (0, 5) -- (3, 5);
    \draw (1, 4) -- (5, 4);
    \draw[very thick] (2, 3) -- (7, 3);
    \draw[dashed] (4, 2) -- (8, 2) node[anchor=west,xshift=-0.7mm]{$I_j$};
    \draw[dashed] (6, 1) -- (9, 1);
    
    \draw[dotted] \foreach \x in {0, 1, ..., 9} {
      (\x, 0) -- (\x, 6) node[anchor=east,xshift=1.3mm,yshift=2mm]{\ifthenelse{\x=1}{$S_y$}{}\ifthenelse{\x=6}{$S_{z'}$}{}}
    };
   
    \fill[black] (5.5, 2) circle (0.15cm);
    \fill[black] (7.5, 1) circle (0.15cm);
   \end{tikzpicture}
  }
  \subfigure[]{\label{fig:sortleft2}
   \begin{tikzpicture}[thick, scale=0.38]
    \fill[gray!20] (5, 0) -- (6, 0) -- (6, 6) -- (5, 6) -- cycle;    

    \draw (0, 5) -- (3, 5);
    \draw (1, 4) -- (5, 4);
    \draw[dashed] (2, 3) -- (7, 3);
    \draw[dashed] (4, 2) -- (8, 2) node[anchor=west,xshift=-0.7mm]{$I_j$};
    \draw[dashed] (6, 1) -- (9, 1);
    
    \draw[dotted] \foreach \x in {0, 1, ..., 9} {
      (\x, 0) -- (\x, 6) node[anchor=east,xshift=1.3mm,yshift=2mm]{\ifthenelse{\x=1}{$S_y$}{}\ifthenelse{\x=6}{$S_{z'}$}{}\ifthenelse{\x=7}{$S_k$}{}}
    };
   
    \fill[black] (6.5, 3) circle (0.15cm);
    \fill[black] (5.5, 2) circle (0.15cm);
    \fill[black] (7.5, 1) circle (0.15cm);
   \end{tikzpicture}
  }
  \subfigure[]{\label{fig:sortleft3}
   \begin{tikzpicture}[thick, scale=0.38]
    \fill[gray!20] (3, 0) -- (4, 0) -- (4, 6) -- (3, 6) -- cycle;    

    \draw (0, 5) -- (3, 5);
    \draw (1, 4) -- (5, 4);
    \draw[dashed] (2, 3) -- (7, 3);
    \draw[dashed] (4, 2) -- (8, 2) node[anchor=west,xshift=-0.7mm]{$I_j$};
    \draw[dashed] (6, 1) -- (9, 1);
    
    \draw[dotted] \foreach \x in {0, 1, ..., 9} {
      (\x, 0) -- (\x, 6) node[anchor=east,xshift=1.3mm,yshift=2mm]{\ifthenelse{\x=1}{$S_y$}{}\ifthenelse{\x=4}{$S_k$}{}\ifthenelse{\x=6}{$S_{z'}$}{}}
    };
   
    \fill[black] (3.5, 3) circle (0.15cm);
    \fill[black] (5.5, 2) circle (0.15cm);
    \fill[black] (7.5, 1) circle (0.15cm);
   \end{tikzpicture}
  }
  \caption{An illustration of the definition of table~$L$. \subref{fig:sortleft1}~$L[y, z', j]$. \subref{fig:sortleft2}~If $k \geq z'$, we recurse on $L[y, z', j-1]$. \subref{fig:sortleft3}~If $k < z'$, we recurse on $L[y, k, j-1]$.}
  \label{fig:sortleft}
\end{figure}
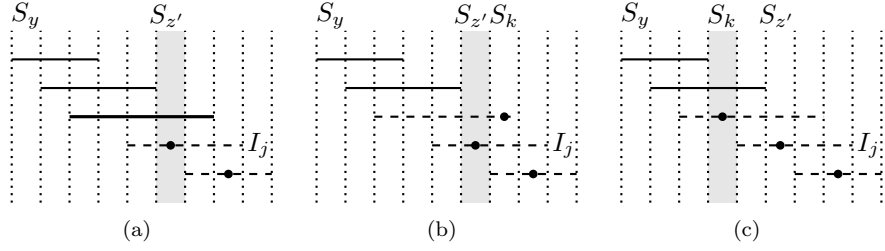

Analogously for table~$R$, let~$I_{j'}$ be the rightmost interval totally contained in $(a_z, b_y)$.
We want to find the best strategy for intervals $I_{j+1}, \ldots, I_{j'}$, assuming that the rightmost queried point is contained in~$S_{y'}$.
Thus, we have
\begin{displaymath}
R[y', z, j] =
\left\{\begin{array}{ll}
 M[y' + 1, z], & \mbox{if } j \geq n \mbox{ or } r_{j+1} > b_z \mbox{ or } I_{j+1} \not\supseteq S_{y'} \\
 w_{j+1} + \displaystyle\sum_{S_k \subseteq I_{j+1}} \Prob[v_{j+1} \in S_k] \cdot R[\max(k, y'), z, j+1], & \mbox{otherwise.}
\end{array}\right.
\end{displaymath}

Observe that the definition of table $L$ only depends on which is the {\em leftmost} region in the cascading area (and, symmetrically, $R$ only depends on the rightmost region).
Therefore, we can simplify the recurrence for table $M$ to
\begin{displaymath}
M[y, z] = \left\{\begin{array}{l}
  0, \hfill \mbox{if } (a_y, b_z) \mbox{ contains less than $2$ intervals} \\
 \displaystyle\min_{I_i \subseteq (a_y, b_z)}
 \displaystyle\sum_{S_x \subseteq I_i} \Prob[v_i \in S_x] \cdot \left(
  \begin{array}{l}
    \displaystyle\sum_{j \in \Ical_x(y, z)} w_j \\
    + \displaystyle\sum_{z' \leq x} p_L(y, z, i, x, z') \cdot L[y, z', \min \Ical_x(y, z)] \\
    + \displaystyle\sum_{y' \geq x} p_R(y, z, i, x, y') \cdot R[y', z, \max \Ical_x(y, z)]
  \end{array}
\right), \hfill \mbox{otherwise.}
 \end{array}\right.
\end{displaymath}
where $p_L(y, z, i, x, z')$ is the probability that $S_{z'}$ is the leftmost region in the cascading area of~$I_i$ in~$\Ical_x(y, z)$, and $p_R(y, z, i, x, y')$ is the probability that $S_{y'}$ is the rightmost region.
We explain how to calculate these probabilities in Section~\ref{sec:probcascading}.

At this point it is not hard to see that the next theorem follows by a standard optimal substructure argument.
We present a proof for the sake of completion.

\begin{theorem}
\label{teo:dp}
The recurrence defined above for $M[1,t]$ correctly defines the minimum expected query cost to solve the stochastic sorting problem with uncertainty.
\end{theorem}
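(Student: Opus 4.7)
The plan is to prove all three recurrences ($M$, $L$, $R$) simultaneously by strong induction on the number of intervals still to be queried in the sub-instance that each entry represents. The base cases are clear from Lemma~\ref{lemma:decideind}: if $(a_y, b_z)$ contains fewer than two intervals, no query is needed, so $M[y,z]=0$ is both feasible and optimal; the base cases for $L$ and $R$ occur when the cascading chain terminates (either $j$ reaches the boundary of the range, or the next candidate interval does not contain the current extreme region $S_{z'}$ or $S_{y'}$), at which point the recurrence correctly hands off to a strictly smaller $M$ sub-problem on a disjoint region range.

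For the inductive step, I would prove two matching inequalities. For the upper bound, I would exhibit a decision tree whose expected cost matches the recurrence: pick the minimizing $I_i$ as the first query, and branch on the region $S_x$ in which $v_i$ lands; along each branch, query the whole forced set $\Ical_x(y,z)$ (Proposition~\ref{fact:offnecessary} asserts these queries are mandatory in every feasible set), continuing to query any further forced intervals to the left and right exactly as captured by the $L$ and $R$ recurrences. Once the cascading area $(a_{z'}, b_{y'})$ is determined by the observed values, the proper-interval structure guarantees that the remaining intervals split into two fully independent sub-instances inside $(a_y, b_{z'-1})$ and $(a_{y'+1}, b_z)$; their minimum expected costs are $M[y, z'-1]$ and $M[y'+1, z]$ by the inductive hypothesis, and stochastic independence across intervals yields the product form of the probabilities used in the recurrence.

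For the lower bound, I would take any feasible decision tree for the sub-instance, let $I_i$ be the interval queried at its root, and argue that along each branch $v_i \in S_x$ the cost is at least the quantity appearing inside the $\min$. Proposition~\ref{fact:offnecessary} forces every interval of $\Ical_x(y,z)$ to appear in that branch, contributing the deterministic $\sum_{j\in\Ical_x(y,z)} w_j$ summand; applying the same argument iteratively to each newly queried cascading interval reproduces the leftward and rightward chains encoded by $L$ and $R$. Once the cascading area stabilizes, the non-cascaded intervals on the two flanks are independent both combinatorially (no dependency edge crosses the cascading area, by the proper-interval property) and stochastically, so the inductive hypothesis applied to $L$ and $R$ lower-bounds the conditional expected cost of each branch by the two corresponding entries.

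The principal obstacle is justifying the left/right decomposition in the lower bound, since an adaptive tree is in principle free to interleave queries from the left sub-instance, the right sub-instance, and the cascading chain in an arbitrary adaptive order. The key observation is that, conditioned on a realization branch, the set of forced queries is determined by the realized values alone, so any reordering of queries within a branch preserves both feasibility and total cost; combined with the combinatorial and stochastic independence of the two flanks, the expected cost of any strategy that solves them jointly is at least the sum of the expected costs of optimal strategies that solve them separately, which by induction equals the $L+R$ contribution. Together with the inductive correctness of $L$ and $R$ on the strictly smaller sub-instances they encode, this yields the matching lower bound and completes the proof.
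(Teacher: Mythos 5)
Your proposal is correct and follows essentially the same approach as the paper: induction on the number of intervals, Proposition~\ref{fact:offnecessary} to identify forced queries, an exchange argument to bring forced queries to the front of each branch, and the combinatorial-plus-stochastic independence of the left and right flanks to justify the $L+R$ decomposition. The only cosmetic difference is that you phrase the argument as a matching pair of upper and lower bounds, whereas the paper instead argues that an optimal decision tree may be assumed without loss of optimality to query forced intervals first, which collapses the two directions into a single structural normalization; the underlying content is the same.
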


\begin{proof}
We prove that, for any $1 \leq y \leq z \leq t$, the value $M[y, z]$ is the expected query cost of a best decision tree for the subproblem defined by the intervals totally contained in $(a_y, b_z)$.
The proof is by induction on the number of intervals totally contained in $(a_y, b_z)$.
If it contains less than two intervals, then no query has to be done to solve this subproblem and the claim follows, so let us assume it contains at least two intervals.

Let us define more precisely how the decision tree for a subproblem is structured.
Let~$\Ical$ be a collection of intervals and queried points, with at least two dependent elements, and let~$T(\Ical)$ be a best decision tree for solving the subproblem defined by~$\Ical$.
The root of the tree indicates which interval~$I_i$ to query first.
Then, for each region~$S_x$ contained in~$I_i$, the tree has a branch which is the decision tree for the remaining intervals, conditioned to the fact that~$v_i \in S_x$; we can write this subtree as $T((\Ical \setminus I_i) \cup \{v_x\})$, for some $v_x \in S_x$.
Note that, for any $v_x \in S_x$, the cost of $T((\Ical \setminus I_i) \cup \{v_x\})$ is the same, since~$v_x$ will be dependent to the same intervals, and the dependencies between other intervals do not change.
The expected cost of the decision tree encoded by~$T(\Ical)$ is then
\begin{displaymath}
\cost(T(\Ical)) = w_{i} + \sum_{S_x \subseteq I_i} \Prob[v_i \in S_x] \cdot \cost(T((\Ical \setminus I_i) \cup \{v_x\})).
\end{displaymath}
The leaves of the tree will correspond to collections of independent intervals, and will have query cost zero.

If a subtree~$T(\Ical)$ contains a queried value~$v_x$ and a non-queried interval~$I_j$ with $v_x \in I_j$, then Prop.~\ref{fact:offnecessary} says that any feasible query set for~$\Ical$ must query~$I_j$.
This implies that~$I_j$ is queried in the path between the root and any leaf of the tree.
Thus, it is easy to see that there is a decision tree for this subproblem with the same cost in which the first query is~$I_j$.
If more than one interval contains a queried value~$v_x$, then we can query them before other intervals, and in any order, so we can actually query all of them at the same time, and have a root with branches for each combination of regions in which the values fall.

The algorithm starts by querying an interval~$I_i$ and, depending on the region~$S_x$ in which~$v_i$ falls, queries all intervals that contain~$S_x$.
Since we have a proper interval graph, the remaining intervals are divided into two independent suproblems.
Also, if the minimal area containing the regions in which the values fall is the same, then the cost of the subtree is the same, since the same intervals will contain a point queried at this time; this implies that each cascading area is a single disjoint event.
Given a cascading area $(a_{z'}, b_{y'})$, the remaining problem consists of finding the best decision tree for two subproblems: one considering that the intervals to the left of~$S_x$ have not been queried and that the leftmost queried point is in~$S_{z'}$, and another that the intervals to the right of~$S_x$ have not been queried and that the rightmost queried point is in~$S_{y'}$.
This is precisely the definition of tables~$L$ and~$R$; thus, if the recurrences for tables~$L$ and~$R$ are correct, then the theorem follows by an optimal substructure argument.

So let us prove that the recurrence for table~$L[y, z', j]$ is correct; the proof for table~$R$ is analogous.
Let us recall the definition: $L[y, z', j]$ is the minimum expected cost of solving the subinstance of $(a_y, b_z)$ (for some $z \geq z')$ consisting of intervals $I_{j'}, I_{j'+1}, \ldots, I_{j-1}$, where~$I_{j'}$ is the leftmost interval totally contained in $(a_y, b_z)$, assuming that the leftmost queried point is contained in $S_{z'}$.
If $j \leq 1$, then $(a_y, b_{z'-1})$ contains no interval and therefore $M[y, z' -1]$ is zero.
If $\ell_{j-1} < a_y$, then no interval to the left of~$I_j$ is contained in $(a_y, b_z)$, so $(a_y, b_{z'-1})$ contains no interval and $M[y, z' -1]$ is zero.
If $j > 1$ and $\ell_{j-1} \geq a_y$, but $I_{j-1} \not\supseteq S_{z'}$, then all intervals in $I_j, I_{j+1}, \ldots$ have a value to the right of $a_{z'}$, and thus any feasible query set to the intervals totally contained in $(a_y, b_{z' - 1})$ is feasible to complement the current decision tree.
Thus, by an optimal substructure argument, $L[y, z', j] = M[y, z' -1]$.
If $j > 1$, $\ell_{j-1} \geq a_y$, and~$I_{j-1}$ contains~$S_{z'}$, then Prop.~\ref{fact:offnecessary} implies that~$I_{j-1}$ must be queried in any feasible query set of the subproblem, and thus can be the first interval queried in this subproblem.
When querying~$I_{j-1}$, we ensure that the leftmost region with a queried point is updated correctly, so the last term in the recurrence for~$L[y, z', j]$ is correct by an optimal substructure argument.
\qed
\end{proof}

The recurrences can be implemented in a bottom-up fashion in time $\Oh(n^5)$: if we precompute the value of $p_L(y, z, i, x, z')$ and $p_R(y, z, i, x, y')$, then each entry of~$M$ is computed in time $\Oh(n^3)$, and each entry of~$L$ and~$R$ can be computed in linear time.
It is possible to precompute $p_L$ and $p_R$ in time $\Oh(n^3)$, which we discuss in Section~\ref{sec:probcascading}.
A more careful analysis shows that the time consumption of the whole algorithm is $\Oh(n^3 d^2)$, where~$d$ is the clique number of the interval graph.
First, we show that each entry of $M$ can be computed in time $\Oh(n d^2)$.
Note that, in a proper interval graph, an interval contains at most $2d-1$ regions, so we have at most $2d - 1$ choices of~$S_x$ for each~$I_i$.
It holds that $\Ical_x(y, z)$ contains at most~$d$ intervals, since every such interval contains~$S_x$.
Moreover, for a given choice of~$S_x$, note that $z'$ cannot go further left than the leftmost region of the leftmost interval that contains~$S_x$, and since $z' \leq x$ we have at most $2d-1$ choices of $z'$; an analogous argument applies to~$y'$.
Now we argue that we only need to compute $L[y, z', j]$ if $S_{z'} \subseteq I_j$: the leftmost queried point cannot be to the right of $I_j$, since we assume $I_j$ was already queried, and cannot be to the left of $I_j$, since we assume $I_j$ is the leftmost queried interval.
Moreover, it is clear that each entry of~$L$ can be computed in time $\Oh(d)$.
Analogous arguments apply to table~$R$.
Finally, in Section~\ref{sec:probcascading} we also refine the analysis to argue that the probabilities can be precomputed in time $\Oh(n^2 d)$.

We now argue that an actual optimal decision tree can be constructed using polynomial time and space, if we represent it by a directed acyclic graph (DAG).
First, we augment table $M$ with an entry for the interval that is queried first in each subproblem, so we can track what is the best option.
Then, we create one node in the DAG for each entry in tables $M, L, R$.
Each node will then have a polynomial number of children: for table $M$, we have a child for each choice of $x, z', y'$ and, for tables $L$ and $R$, we have a child for each choice of $k$.
Note that the same entry of the table is used by overlapping subproblems, so we can do the same for the nodes in the DAG.
It is not hard to see that this construction can be done using at most as much time as for computing the original tables.

It seems difficult to improve this dynamic programming algorithm to something better than $\Oh(n^3 \cdot \poly(d))$.
Note that the main information that the decision tree encodes is which interval should be queried first in a given independent subproblem (and there are $\mathrm{\Omega}(n^2)$ such subproblems).
We could hope to find an optimal substructure that would not need to test every interval as a first query, and that this information could somehow be inferred from smaller subproblems.
However, consider $I_1 = (0, 100)$, $I_2 = (6, 105)$, and $I_3 = (95, 198)$, with uniform query costs and uniform probability distributions.
The optimum solution for the first two intervals is to first query~$I_1$, but the optimum solution for the whole instance is to start with~$I_2$.
Thus, even though~$I_2$ is a suboptimal first query for the smaller subproblem, it is the optimal first query for the whole instance.
This example could be adapted to a larger instance with more than~$d$ intervals, so that we need at least a linear pass in~$n$ to identify the best first query.

\subsection{Simpler strategies that fail}

It may seem that our dynamic programming strategy above is overly complex, and that a simpler algorithm may suffice to solve the problem.
Below, we show sub-optimality of two  such strategies.

We begin by showing that any greedy strategy that only takes into consideration local information (such as degree in the dependency graph or overlap area) fails.
Consider a $5$-path with intervals $I_1, \ldots, I_5$, where interval $I_i$ has non-empty intersection with intervals $I_{i-1}$ (if $i > 1$) and $I_{i+1}$ (if $i < 5$).
Let each interval have query cost~$1$ and an overlap of~$1/3$ with each of its neighbors, and the exact value be uniformly distributed in each interval.
It can be shown by direct calculation that if we query $I_2$ (or, equivalently,~$I_4$) first, then we have an expected query cost of at most $29/9 = 3.2\bar{2}$, while querying $I_3$ first yields an expected query cost of at least $11/3 = 3.6\bar{6}$.
However, a greedy strategy that only takes into consideration local information cannot distinguish between~$I_2$ and~$I_3$.

One technique that has been frequently applied in the literature of uncertainty problems with query minimization is the use of {\bf witness sets}.
A set of intervals~$W$ is a witness if a correct solution for the underlying problem cannot be computed unless at least one interval in~$W$ is queried, even if all other intervals not in~$W$ are queried.
Witness sets are broadly adopted because they simplify the design of query-competitive adaptive algorithms.
If, at every step, an algorithm queries disjoint witness sets of size at most~$\alpha$, then this algorithm is $\alpha$-query-competitive.
This concept was proposed in~\cite{bruce05uncertainty}.
For the sorting problem, by Lemma~\ref{lemma:decideind}, any pair of dependent intervals constitute a witness set.
However, we cannot take advantage of witness sets for the stochastic version of the problem, even for uniform query costs and uniform probability distributions, and even if we take advantage of the proper interval order.
Consider the following intervals: $(0, 100), (95, 105), (98, 198)$.
The witness set consisting of the first two intervals may lead us to think that either of them is a good choice as the first query.
However, the unique optimum solution first queries the third interval.
(The costs are $843/400 = 2.1075$ if we first query the first interval, $277/125 = 2.216$ if we first query the second interval, and $4182/2000 = 2.0915$ if we first query the third interval.)

\subsection{Computing the probability of a cascading area}
\label{sec:probcascading}

Let us discuss how to calculate $p_L(y, z, i, x, z')$, which is the probability that, given that $v_i \in S_x$, it holds that $v_k > a_{z'}$ for each $k \in \Ical_x(y, z)$, and some $j \in \Ical_x(y, z)$ has $v_j \in S_{z'}$.
(The arguments are symmetric for $p_R(y, z, i, x, y')$, so we omit them.)
Note that, considering the definition of table $M$, we can assume that $\Prob[v_i \in S_x] > 0$; otherwise we don't need to compute $p_L(y, z, i, x, z')$.
We will also assume that $z' \neq x$, and the remaining case can be computed similarly.

Let us fix the values $y, z, x$, and denote $\Ical = \Ical_x(y, z)$, $\Ical_i = \Ical_x(y, z) \setminus \{ i \}$, and $p_L(z', i) = p_L(y, z, i, x, z')$.
Let us further denote by $q_L(z')$ the probability that each $k \in \Ical$ has $v_k > a_{z'}$.
Note that
\begin{displaymath}
 q_L(z')=\prod_{j\in \Ical} \Prob[v_j > a_{z'}],
\end{displaymath}
and $q_L(z')$ can be computed in linear time.
Let
\begin{displaymath}
q_L(z', i) = \frac{q_L(z')}{\Prob[v_i > a_{z'}]}
\end{displaymath}
denote the similar probability defined for the set of intervals $\Ical_i$, instead of $\Ical$, with $\Prob[v_i > a_{z'}] > 0$.

Let $P_L(z', i)$ be the event corresponding to $p_L(z', i)$, and $Q_L(z', i)$ be the event corresponding to $q_L(z', i)$.
Then $P_L(z', i) = Q_L(z', i) \setminus Q_L(z'+1, i)$, and $\Prob[P_L(z', i)] = \Prob[Q_L(z', i)] - \Prob[Q_L(z'+1, i)]$ because $Q_L(z'+1, i) \subseteq Q_L(z', i)$.
Thus, we have that
\begin{displaymath}
p_L(z', i) = q_L(z', i) - q_L(z'+1, i) = \frac{q_L(z')}{\Prob[v_i > a_{z'}]} - \frac{q_L(z'+1)}{\Prob[v_i > a_{z'+1}]}.
\end{displaymath}
Note that all divisions in the equation are defined, because we assumed that $\Prob[v_i \in S_x] > 0$ and $z' < x$.
Since $q_L(z')$ does not depend on $i$, for each fixed $x,y,z$, it can be computed in linear time, so all $q_L(z')$ can be computed for all $x,y,z$ in time $\Oh(n^5)$.
Given this precomputation, all values $p_L(z', i)$ for all $y, z, i, x, z'$ can be precomputed in $\Oh(n^5)$ time.

We can further improve the runtime to $\Oh(n^4)$ as follows.
Here, we fix $z'$ and $y,z$, and compute $q_L(z')$ for various values of $x$, so let us now denote $q_L(x)=q_L(z')$.
Recall that $q_L(x)=\prod_{j\in \Ical_x(y,z)} \Prob[v_j > a_{z'}]$, and note that each factor in this product is non-zero due to assumption~(3) at the beginning of Section~\ref{sec:sorting}.
We compute~$q_L(x)$ sequentially from $x=z'$ to $x = x^*$, where $x^*$ is the rightmost region such that there is an interval that contains both $S_{z'}$ and $S_{x^*}$.
Given $q_L(x)$ for some $x$, $q_L(x+1)$ is computed by removing from the product intervals in $\Ical_x(y,z)\setminus \Ical_{x+1}(y,z)$ and including intervals in $\Ical_{x+1}(y,z)\setminus \Ical_{x}(y,z)$. The remainder of the product is reused. Observe that, during this computation, each interval that contains a region in $(a_{z'}, b_{x^*})$ is included in (and removed from) the product exactly once. Therefore, the computation corresponding to fixed values of $y,z,z'$ can be done in time proportional to the number of regions $x$ between $z'$ and $x^*$ plus the number of intervals that contain some region in $(a_{z'}, b_{x^*})$, that is, in $\Oh(n)$ time. Thus, the computation of the whole table takes $\Oh(n^4)$ time.

Finally, the preprocessing time can be further reduced (still having constant time computation of $p_L(\cdots)$ after preprocessing), by decomposing the table for~$q$ into two parts, based on the following observation. The first table is indexed by $y,x,z'$, while the second one by $z,x,z'$. Consider the value $q_L(y,z,x,z')$. This is a product of probabilities ranging over intervals $\Ical_x(y,z)$. The observation is that $\Ical_x(y,z) = \Ical_x(y,t)\setminus \Jcal_x(z, t)$, where $\Jcal_x(z, t)$ is the set of intervals that contain both $S_x$ and $S_z$ and do not end at $b_z$. Thus,
\begin{displaymath}
q_L(y,z,x,z')= \frac{q_L(y,t,x,z')}{q'_L(z,t,x,z')}
\end{displaymath}
 if the denominator is non-zero, and otherwise $q_L(y,z,x,z')= q_L(y,t,x,z')$, where $q'_L$ is defined similarly as~$q_L$, except it ranges over $\Jcal_x(z, t)$. The subtables~$q_L$ and $q'_L$ can be computed in $\Oh(n^3)$ time using the observations in the previous paragraph.

To refine the analysis, recall that each interval in a proper interval graph contains at most $2d-1$ regions.
Since we only consider pairs $z', x$ such that there is an interval that contains both $S_{z'}$ and $S_x$, we have at most $2d-1$ choices of $x$ for each choice of $z'$.
Thus, we only need time $\Oh(d)$ to compute $q_L(y,z,z')$ for all values of $x$ and fixed $y,z,z'$, and the whole preprocessing can be done in time $\Oh(n^2 d)$.

\subsection{Allowing arbitrary probabilities on interval endpoints}
\label{sec:probboundary}

In this section we discuss how to remove the assumption that $\Prob[v_i = a_x] = \Prob[v_i = b_x] = 0$ for every interval $I_i$ and every region $S_x$.
Remember that we assume that all intervals are open.
Suppose that $\Prob[v_i = b_x] > 0$ for some interval $I_i$ and some region $S_x$ with $x < t$.
(Note that $b_x = a_{x+1}$ if $x < t$, and $\Prob[v_i = a_1] = \Prob[v_i = b_t] = 0$ as all intervals are open.)
We consider three cases.
\begin{enumerate}
  \item If there is some interval $I_j$ with $\ell_j = b_x$ but no interval $j'$ with $r_{j'} = b_x$, then we can simply move the probability $\Prob[v_i = b_x]$ to $\Prob[v_i = b_x - \varepsilon]$, for some $\varepsilon < b_x - a_x$.
  Note that we did not need to query $I_j$ if originally $v_i = b_x$; we could simply make $\pi(i) < \pi(j)$.
  On the other hand, all intervals that originally contained $b_x$ will contain $b_x - \varepsilon$, so they will be queried if originally $v_i = b_x$.
  \item We can do a symmetric operation if there is some interval $I_j$ with $r_j = b_x$ but no interval $j'$ with $\ell_{j'} = b_x$.
  \item If there is an interval $I_j$ with $\ell_j = b_x$ and an interval $I_{j'}$ with $r_{j'} = b_x$, then we can ``shift'' the whole ray $(b_x, +\infty)$ to the right by $2\varepsilon$, for some $\varepsilon > 0$.
  More formally, any interval $I_k$ with $\ell_k \geq b_x$ will have a new left endpoint $\ell'_k = \ell_k + 2\varepsilon$; any interval $I_k$ with $r_k > b_x$ will have a new right endpoint $r'_k = r_k + 2\varepsilon$; for any $\alpha > b_x$, we will move the probability $\Prob[v_k = \alpha]$ to $\Prob[v_k = \alpha + 2\varepsilon]$.
  Then we can move the probability $\Prob[v_i = b_x]$ to $\Prob[v_i = b_x + \varepsilon]$; we did not need to query $I_j$ or $I_{j'}$ if originally $v_i = b_x$ (we could simply make $\pi(j') < \pi(i) < \pi(j)$), and all intervals that originally contained $b_x$ will also contain $b_x + \varepsilon$ (so they will be queried if originally $v_i = b_x$).
\end{enumerate}
Note that the last transformation creates new regions, but the total number of regions will be at most twice the original, so the time consumption does not increase asymptotically.

Moreover, the algorithm can be modified to allow closed and half-open intervals, and thus discrete distributions with finite support.
We only need to be careful when $v_i = b_x$ for some queried interval~$I_i$ and some region $S_x$: we do not need to query all intervals that contain $S_x$, but only intervals $I_j$ with $b_x \in (\ell_j, r_j)$.

\section{Finding the Minimum}
\label{sec:minimum}

We also consider the problem of finding the minimum (or, equivalently, the maximum) of~$n$ unknown values $v_1, \ldots, v_n$.
Note that we do not need to know the precise minimum value, but just the data item that contains it; therefore, in some cases it is not necessary to query the corresponding interval.
We assume that all intervals are open\footnote{The minimum problem has unbounded competitive ratio in the adversarial setting unless we only have open intervals \cite{gupta16queryselection}; we believe that one may obtain better ratios in the stochastic setting, but this particular case did not seem interesting enough for us to devote much attention to it.}, and that they are sorted by the left endpoint, i.e., $\ell_1 \leq \ell_2 \leq \cdots \leq \ell_n$.
Regarding the probability distributions, we also assume constraints (1)--(3) described at the beginning of Section~\ref{sec:sorting}.
Let $\Ical = \{I_1, \ldots, I_n\}$.

We begin by discussing some assumptions we can make.
First, we can assume that the interval graph is a clique: with two independent intervals, we can remove the one on the right.
(However, we cannot assume a proper interval graph, as we did for sorting: the argument that one of the nested intervals has to be queried no longer holds, since here we are only interested in finding the minimum rather than the total order.)
The last assumption is that~$I_1$ does not contain another interval; it is based on the following proposition and implies that~$r_1 = \min_i r_i$.
Note that it also implies that $\ell_1 < \ell_2, \ldots, \ell_n$.

\pagebreak

\begin{proposition}
\label{fact:minI1}
If~$I_1$ contains some~$I_j$, then~$I_1$ is queried in every feasible query set.
\end{proposition}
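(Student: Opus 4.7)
The plan is to argue by contradiction: I will suppose there is a feasible query set $Q$ for some realization $(v_1,\dots,v_n)$ with $I_1\notin Q$ and exhibit two alternate realizations that are indistinguishable to $Q$ but admit different argmins, violating feasibility.

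First I will establish two quantitative bounds on the set of other values. Since the intervals are sorted so that $\ell_1\le\ell_i$ for all $i$ and every interval is open, I have $v_i>\ell_i\ge\ell_1$ for every $i$, so $m:=\min_{i\ne 1}v_i$ satisfies $m>\ell_1$. Using $I_j\subseteq I_1$ together with the openness of $I_j$, I also have $v_j<r_j\le r_1$, and since $j\ne 1$ this gives $m\le v_j<r_1$. Combining these, $\ell_1 < m < r_1$, which leaves positive slack at both endpoints of $I_1$.

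Next I will pick any $\epsilon$ with $0<\epsilon<\min(m-\ell_1,\,r_1-m)$ and define two realizations $\omega^{(1)},\omega^{(2)}$ that agree with $(v_1,\dots,v_n)$ on every coordinate $i\ne 1$ and set $v_1^{(1)}=\ell_1+\epsilon$ and $v_1^{(2)}=r_1-\epsilon$ respectively. Both values lie in the open interval $I_1$, so both realizations are consistent with the interval data. In $\omega^{(1)}$ one has $v_1^{(1)}<m$, so the minimum is attained at index $1$; in $\omega^{(2)}$ one has $v_1^{(2)}>m$, so the minimum is attained at some $\argmin_{i\ne 1}v_i \ne 1$. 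Because $I_1\notin Q$ and the two realizations coincide on every other coordinate, querying the intervals in $Q$ returns identical answers in both, so $Q$ cannot determine the correct answer. This contradicts feasibility of $Q$ and completes the proof.

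This is the direct analogue of Prop.~\ref{fact:offnecessary} for the minimum problem, and I do not anticipate any real obstacle; the only delicate point is tracking strict inequalities arising from the openness of the intervals, which is essential for placing $v_1^{(1)}$ strictly below $m$ and $v_1^{(2)}$ strictly above it while remaining inside $I_1$.
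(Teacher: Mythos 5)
Your proof is correct and follows the same contradiction strategy as the paper. The paper's proof is terser: it asserts directly that a feasible $Q$ without $I_1$ forces $Q = \Ical \setminus \{I_1\}$ and $v_i > r_1$ for all $i \in Q$, then notes this contradicts $v_j < r_j \le r_1$; you instead spell out the indistinguishability argument that justifies that assertion by exhibiting the two realizations $\omega^{(1)}, \omega^{(2)}$ with different argmins, which is a cleaner and slightly more self-contained presentation of the same idea (and, incidentally, does not need the clique assumption).
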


\begin{proof}
Suppose by contradiction that there is a feasible query set $Q$ that does not contain $I_1$.
Since $\ell_1 = \min_i \ell_i$, we must have $Q = \mathcal{I} \setminus \{I_1\}$ and $v_i 
> r_1$ for every $I_i \in Q$, otherwise we cannot decide if~$v_1$ is the minimum.
However, assuming both $v_j > r_1$ and $I_j \subseteq I_1$ is a contradiction.
\qed
\end{proof}

It is also useful to understand how to find an optimum query set, i.e., to solve the verification problem assuming we know $v_1, \ldots, v_n$.

\begin{lemma}
\label{lemma:minoff}
The optimum query set either
\begin{enumerate}[(a)]
 \item queries interval $I_i$ with minimum $v_i$ and each interval $I_j$ with $\ell_j < v_i$; or
 \item queries all intervals except for $I_1$, if $v_1$ is the minimum, $v_j > r_1$ for all $j > 1$, and this is better than option~(a).
\end{enumerate}
\end{lemma}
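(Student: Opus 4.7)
The plan is to fix a realization $v_1, \ldots, v_n$, set $i^* = \arg\min_i v_i$, and characterize the feasible query sets $Q$ by a case analysis on whether $I_{i^*} \in Q$. Recall that $Q$ is feasible iff, after querying, every completion of the unqueried values (each $v'_j \in I_j$ for $j \notin Q$) still has $i^*$ as the minimizer; because the intervals are open, the relevant feasibility conditions will be obtained by letting the $v'_j$ tend to the endpoints of their intervals.

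In the first case, $I_{i^*} \in Q$: for each $j \notin Q$, feasibility requires $v'_j > v_{i^*}$ for every $v'_j \in I_j$, which (by the open-interval assumption) is equivalent to $\ell_j \geq v_{i^*}$, and this condition is also sufficient. Hence the cheapest feasible set containing $I_{i^*}$ is exactly $\{I_j : \ell_j < v_{i^*}\}$, which, since $\ell_{i^*} < v_{i^*}$, includes $I_{i^*}$ itself and matches option~(a).

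In the second case, $I_{i^*} \notin Q$: for any $j \neq i^*$ with $j \notin Q$, feasibility demands $v'_j > v'_{i^*}$ for all $v'_j \in I_j$ and $v'_{i^*} \in I_{i^*}$, i.e., $\ell_j \geq r_{i^*}$, meaning that $I_j$ and $I_{i^*}$ are independent; this contradicts the reduction to a clique performed immediately before the lemma. So necessarily $Q = \Ical \setminus \{I_{i^*}\}$, and the same argument applied to $j \in Q$ (with $v_j$ now known) yields $v_j \geq r_{i^*}$ for every $j \neq i^*$. Now I invoke the other reduction preceding the lemma: $I_1$ contains no other interval and $\ell_1$ is strictly smallest, so $r_1 \leq r_j$ for all $j$. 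If $i^* \neq 1$ we would obtain $v_1 \geq r_{i^*} \geq r_1$, contradicting $v_1 < r_1$ (openness). Hence $i^* = 1$ and $v_j \geq r_1$ for all $j > 1$, which is option~(b); the strict inequality $v_j > r_1$ stated in the lemma is the generic case, and the zero-probability boundary $v_j = r_1$ can be handled identically.

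The main obstacle is the second case, where two separate feasibility constraints (one for $j \in Q$, one for $j \notin Q$) must be combined with the structural fact $r_1 = \min_j r_j$ coming from the preceding reductions to force $i^* = 1$. Once that is in place, the lemma follows by taking the cheaper of the two candidate query sets singled out above.
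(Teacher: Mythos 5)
Your proof is correct, and it takes a genuinely different route from the paper's. The paper first splits on whether the true minimizer is $I_1$ or some $I_i$ with $i \neq 1$, argues that $I_i$ must be queried by reapplying the reasoning of Prop.~\ref{fact:minI1} to the current leftmost interval, and then runs a separate induction on $j$ to show that $I_1, \ldots, I_j$ are all forced. You instead split directly on whether $I_{i^*} \in Q$ and read off the feasibility constraints from first principles: when $I_{i^*} \in Q$, the open-interval assumption immediately gives $\ell_j \geq v_{i^*}$ as the exact condition for $I_j$ to be omittable, which yields option~(a) with no induction; when $I_{i^*} \notin Q$, the clique reduction forces $Q = \Ical \setminus \{I_{i^*}\}$ in one step, and $r_1 = \min_j r_j$ then forces $i^* = 1$. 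Your version makes the roles of the two standing reductions (the clique assumption and the fact that $I_1$ contains no other interval) explicit, whereas the paper invokes them only implicitly inside Prop.~\ref{fact:minI1}; it also avoids citing that proposition for intervals other than $I_1$, which is a slight stretch in the paper's wording. Your observation that the argument actually yields $v_j \geq r_1$ rather than the strict $v_j > r_1$ in the lemma statement is accurate and harmless: boundary events have probability zero under the paper's standing assumptions.
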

Option~(b) can be better not only  due to a particular non-uniform query cost configuration, but also with uniform query costs, when $v_1 \in I_2, \ldots, I_n$.
Note also that~$I_1$ is always queried in option~(a).

\begin{proof}
First let us consider the case when~$v_1$ is the minimum.
If $v_j < r_1$ for some $j > 1$, then~$I_1$ has to be queried even if all other intervals have already been queried, due to Prop.~\ref{fact:minI1}.
Thus, the only situation in which~$I_1$ may not be queried is when $v_j > r_1$ for all $j > 1$, and in this case clearly we have to query all other intervals, since otherwise we cannot decide who is the minimum.

Now we prove that, if~$v_i$ is the minimum with $i \neq 1$, then~$I_i$ must be queried.
Since~$v_i$ is the minimum, all other values fall to the right of $\ell_i$.
In particular, $v_1 \in I_i$, since~$I_1$ has minimum~$r_1$.
Thus, even if all other intervals have already been queried, $I_i$~must be queried due to Prop.~\ref{fact:minI1}.

It remains to prove that, for any $i \geq 1$, if~$v_i$ is minimum and~$I_i$ is queried, then all intervals with $\ell_j < v_i$ must also be queried; we actually prove that $I_1, \ldots, I_j$ must be queried, by induction on~$j$.
The base case is $j = 1$, and if $i \neq 1$ the claim follows from Prop.~\ref{fact:minI1}, since $v_i < r_1$ and $\ell_1 < \ell_i$.
So assume $j > 1$; note that $\ell_{j-1} \leq \ell_j < v_i$ thus, by induction hypothesis, $I_1, \ldots, I_{j-1}$ must be queried.
Since~$v_i$ is the minimum, $v_k \geq v_i > \ell_j$, for $k = 1, \ldots, j-1$.
Therefore, after $I_1, \ldots, I_{j-1}$ are queried, $I_j$~is the leftmost interval, and must be queried due to Prop.~\ref{fact:minI1}, since it contains~$v_i$.
\qed
\end{proof}

We first discuss what happens if the first interval we query is~$I_1$.
In Figure~\ref{fig:minfirst1}, we suppose that $v_1 \in S_3$.
This makes~$I_2$ become the leftmost interval, so it must be queried, since it contains~$v_1$.
At this point we also know that we do not need to query~$I_4$, since $v_1 < \ell_4$.
After querying~$I_2$, we have two possibilities. In Figure~\ref{fig:minfirst2}, we suppose that $v_2 \in S_2$, so we already know that~$v_2$ is the minimum and no other queries are necessary.
In Figure~\ref{fig:minfirst3}, we suppose that $v_2 \in S_6$, so we still need to query~$I_3$ to decide if~$v_1$ or~$v_3$ is the minimum.
Note that, once~$I_1$ has been queried, we do not have to guess which interval to query next, since any interval that becomes the leftmost interval will either contain~$v_1$ or will be to the right of~$v_1$.
Since this is an easy case of the problem, we formalize how to solve it.
The following claim is clear: if we have already queried $I_1, \ldots, I_{i-1}$ and $v_1, \ldots, v_{i-1} > \ell_i$, then we have to query~$I_i$.
(This relies on~$I_i$ having minimum~$\ell_i$ among $I_i, \ldots, I_n$.)
If we decide to first query~$I_1$, then we are discarding option~(b) in the offline solution, so all intervals containing the minimum value must be queried.
The expected query cost is then $\sum_{i = 1}^n w_i \cdot \Prob[I_i \mbox{ must be queried}]$.
Given an interval~$I_i$, it will not need to be queried if there is some~$I_j$ with $v_j < \ell_i$, thus the former probability is the probability that no value lies to the left of~$I_i$.
Since the probability distribution is independent for each interval, the expected query cost will be
\begin{displaymath}
 \sum_{i=1}^n w_i \cdot \prod_{j < i} \Prob[v_j > \ell_i].
\end{displaymath}
This can be computed in $\Oh(n^2)$ time.

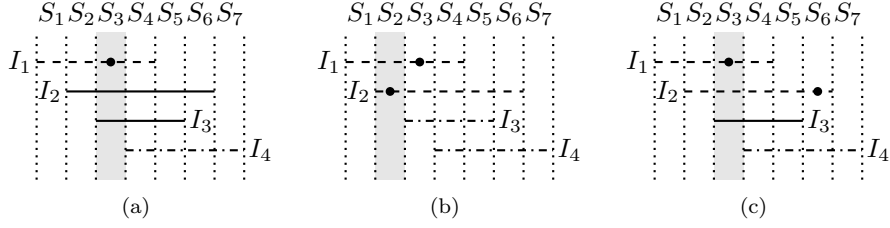
\begin{figure}[t]
  \centering
  \subfigure[]{\label{fig:minfirst1}
   \begin{tikzpicture}[thick, scale=0.39]
    \fill[gray!20] (2, 0) -- (3, 0) -- (3, 5) -- (2, 5) -- cycle;    

    \draw[dashed] (0, 4) node[anchor=east,xshift=0.7mm]{$I_1$} -- (4, 4);
    \draw (1, 3) node[anchor=east,xshift=0.7mm]{$I_2$} -- (6, 3);
    \draw (2, 2) -- (5, 2) node[anchor=west,xshift=-0.7mm]{$I_3$};
    \draw[dashdotted] (3, 1) -- (7, 1) node[anchor=west,xshift=-0.7mm]{$I_4$};

    \draw[dotted] (0, 0) -- (0, 5);
    \draw[dotted] \foreach \x in {1, 2, ..., 7} {
      (\x, 0) -- (\x, 5) node[anchor=east,xshift=1.3mm,yshift=2.5mm]{$S_\x$}
    };
   
    \fill[black] (2.5, 4) circle (0.15cm);
   \end{tikzpicture}
  }
  \subfigure[]{\label{fig:minfirst2}
   \begin{tikzpicture}[thick, scale=0.39]
    \fill[gray!20] (1, 0) -- (2, 0) -- (2, 5) -- (1, 5) -- cycle;    

    \draw[dashed] (0, 4) node[anchor=east,xshift=0.7mm]{$I_1$} -- (4, 4);
    \draw[dashed] (1, 3) node[anchor=east,xshift=0.7mm]{$I_2$} -- (6, 3);
    \draw[dashdotted] (2, 2) -- (5, 2) node[anchor=west,xshift=-0.7mm]{$I_3$};
    \draw[dashdotted] (3, 1) -- (7, 1) node[anchor=west,xshift=-0.7mm]{$I_4$};

    \draw[dotted] (0, 0) -- (0, 5);
    \draw[dotted] \foreach \x in {1, 2, ..., 7} {
      (\x, 0) -- (\x, 5) node[anchor=east,xshift=1.3mm,yshift=2.5mm]{$S_\x$}
    };
   
    \fill[black] (2.5, 4) circle (0.15cm);
    \fill[black] (1.5, 3) circle (0.15cm);
   \end{tikzpicture}
  }
  \subfigure[]{\label{fig:minfirst3}
   \begin{tikzpicture}[thick, scale=0.39]
    \fill[gray!20] (2, 0) -- (3, 0) -- (3, 5) -- (2, 5) -- cycle;    

    \draw[dashed] (0, 4) node[anchor=east,xshift=0.7mm]{$I_1$} -- (4, 4);
    \draw[dashed] (1, 3) node[anchor=east,xshift=0.7mm]{$I_2$} -- (6, 3);
    \draw (2, 2) -- (5, 2) node[anchor=west,xshift=-0.7mm]{$I_3$};
    \draw[dashdotted] (3, 1) -- (7, 1) node[anchor=west,xshift=-0.7mm]{$I_4$};

    \draw[dotted] (0, 0) -- (0, 5);
    \draw[dotted] \foreach \x in {1, 2, ..., 7} {
      (\x, 0) -- (\x, 5) node[anchor=east,xshift=1.3mm,yshift=2.5mm]{$S_\x$}
    };
   
    \fill[black] (2.5, 4) circle (0.15cm);
    \fill[black] (5.5, 3) circle (0.15cm);
   \end{tikzpicture}
  }
  \caption{A simulation of the querying process when we decide to query~$I_1$ first. \subref{fig:minfirst1}~If $v_1 \in S_3$, $I_2$~must be queried, but not~$I_4$. \subref{fig:minfirst2}~If $v_2 \in S_2$, then~$v_2$ is the minimum. \subref{fig:minfirst3}~If $v_2 \in S_6$, then we still have to query~$I_3$.}
  \label{fig:minfirst}
\end{figure}

Now let us consider what happens if an optimum decision tree does not start by querying~$I_1$, but by querying some~$I_k$ with $k > 1$.
When we query~$I_k$, we have two cases: (1)~if~$v_k$ falls in~$I_1$, then we have to query~$I_1$ and proceed as discussed above, querying~$I_2$ if $v_1 > \ell_2$, then querying~$I_3$ if $v_1, v_2 > \ell_3$ and so on; (2) if $v_k \notin I_1$, then~$v_k$ falls to the right of $\ell_i$, for all~$i \neq k$, so essentially the problem consists of finding the optimum decision tree for the remaining intervals, and this value will be independent of $v_k$.
Therefore, the cost of querying $I_k$ first is
\begin{displaymath}
w_k + \Prob[v_k \notin I_1] \cdot \opt(\Ical \setminus \{I_k\}) + \Prob[v_k \in I_1] \cdot \sum_{i \neq k} w_i \cdot \prod_{j < i} \Prob[v_j > \ell_i | v_k \in I_1]. 
\end{displaymath}
Thus, we can see that a decision tree can be specified simply by a permutation of the intervals, since the last term in the last equation is fixed.
More precisely, let $a(1), \ldots, a(n)$ be a permutation of the intervals, where $a(k) = i$ means that~$I_i$ is the $k$-th interval in the permutation.
We have two types of subtrees.
Given a subset~$X_k = \{a(k), \ldots, a(n)\}$ that contains~$1$, let~$\hat{T}_k$ be the tree obtained by first querying~$I_1$, then querying the next leftmost interval in~$X_k$ if it contains~$v_1$ and so on.
The second type of subtree~$T_k$ is defined by a suffix $a(k), \ldots, a(n)$ of the permutation.
If $a(k) \neq 1$, then~$T_k$ is a decision tree with a root querying $I_{a(k)}$ and two branches.
One branch, with probability $\Prob[v_{a(k)} \in I_1]$, consists of~$\hat{T}_{k+1}$; the other branch, with probability $\Prob[v_{a(k)} \notin I_1]$, consists of~$T_{k+1}$.
If $a(k) = 1$, then~$T_k = \hat{T}_k$, unless $k = n$, in which case~$T_n$ will be empty: $I_1$~does not need to be queried, because all other intervals have already been queried and their values fall to the right of~$I_1$.
We have that $\cost(T_k)$
\begin{displaymath}
= \left\{ \begin{array}{ll}
   0, & \mbox{if } a(k) = 1 \mbox{ and } k = n \\
   \cost(\hat{T}_k), & \mbox{if } a(k) = 1 \mbox{ and } k < n \\
   w_{a(k)} + \Prob[v_{a(k)} \in I_1] \cdot \cost(\hat{T}_{k+1} | v_{a(k)} \in I_1)
     + \Prob[v_{a(k)} \notin I_1] \cdot \cost(T_{k+1}), & \mbox{otherwise.}
 \end{array} \right.
\end{displaymath}
Note that in the last case we need to condition~$\cost(\hat{T}_{k+1})$ to the fact that $v_{a(k)} \in I_1$.
We have that
\begin{displaymath}
\cost(\hat{T}_{k}) = 
\sum_{i = k}^n w_{a(i)} \cdot \prod_{\substack{j \geq k \\ a(j) < a(i)}} \Prob[v_{a(j)} > \ell_{a(i)}],
\end{displaymath}
and
\begin{displaymath}
\cost(\hat{T}_{k+1} | v_{a(k)} \in I_1) = 
\sum_{i = k+1}^n w_{a(i)} \cdot \prod_{\substack{j \geq k \\ a(j) < a(i)}} \Prob[v_{a(j)} > \ell_{a(i)} | v_{a(k)} \in I_1].
\end{displaymath}

It holds that, if~$I_1$ is not the last interval in a decision tree permutation, then it is always better to move~$I_1$ one step towards the beginning of the permutation.
This fact is formalized in the following lemma.

\pagebreak

\begin{lemma}
 Given a decision tree permutation $I_k I_1 I_{k'} \cdots$ of a subset~$S$, with $|S| \geq 3$ and $I_1 \in S$, it costs at least as much as the cost of $I_1 I_k I_{k'} \cdots$.
\end{lemma}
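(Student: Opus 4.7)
The plan is a direct exchange argument. Write $p = \Prob[v_k \in I_1]$ and expand both costs via the recurrences given in the paper. Since $I_1$ sits in position~$2$ of $\sigma$ and $|S|\geq 3$, the subtree $T_2$ in $\sigma$ equals $\hat T_2$ (the $\hat T$-tree on $S\setminus\{k\}$), so
\[ \cost(T_\sigma) = w_k + p\cdot \cost(\hat T_2\mid v_k\in I_1) + (1-p)\cdot \cost(\hat T_2), \]
whereas $\cost(T_{\sigma'}) = \cost(\hat T_1')$, the $\hat T$-tree on all of $S$. Both are sums of terms of the form $w_i\prod_{j<i}\Prob[v_j>\ell_i]$ (with appropriate ranges and conditioning), so the strategy is to collect, for each $i\in S\setminus\{k\}$, its contribution to the convex combination on the right-hand side above, and to match it with the $i$-th term of $\cost(\hat T_1')$.

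For $i<k$ the value $v_k$ never enters the product (since the range $\{j\in S:j<i\}$ excludes $k$), so by independence of the $v_j$'s both the conditional and unconditional contributions already coincide with $w_i\prod_{j\in S,\,j<i}\Prob[v_j>\ell_i]$. For $i>k$ an extra factor of $(1-p) + p\cdot\Prob[v_k>\ell_i\mid v_k\in I_1]$ appears, and the main obstacle of the proof is to identify this factor with $\Prob[v_k>\ell_i]$. Unfolding the conditional, the factor equals $\Prob[v_k\notin I_1] + \Prob[v_k\in I_1,\, v_k>\ell_i]$, i.e.\ the probability of the disjoint union $\{v_k\notin I_1\}\cup\{v_k\in I_1,\,v_k>\ell_i\}$. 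Since $v_k>\ell_k>\ell_1$ (as $k\neq 1$), the event $\{v_k\notin I_1\}$ collapses to $\{v_k\geq r_1\}$, and here the clique assumption on the interval graph enters crucially: $I_i\cap I_1\neq\emptyset$ forces $\ell_i<r_1$, whence $\{v_k\geq r_1\}\subseteq\{v_k>\ell_i\}$. The union is therefore exactly $\{v_k>\ell_i\}$, as desired.

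Combining these observations, every $i\in S\setminus\{k\}$ contributes $w_i\prod_{j\in S,\,j<i}\Prob[v_j>\ell_i]$ to the convex combination, and $\cost(T_\sigma) - \cost(T_{\sigma'})$ reduces to the difference of the $I_k$-terms: $w_k$ in $\cost(T_\sigma)$ versus $w_k\prod_{j\in S,\,j<k}\Prob[v_j>\ell_k]$ in $\cost(T_{\sigma'})$. Thus
\[ \cost(T_\sigma) - \cost(T_{\sigma'}) = w_k\Bigl(1-\prod_{j\in S,\,j<k}\Prob[v_j>\ell_k]\Bigr)\ \geq\ 0, \]
as claimed. Intuitively, $\sigma$ pays $w_k$ unconditionally, while $\sigma'$ pays it only on the event that no earlier-indexed interval in $S$ has already ruled $I_k$ out of contention; all other queries agree in expectation.
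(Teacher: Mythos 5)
Your proof is correct and follows essentially the same route as the paper's: both condition on whether $v_k\in I_1$, invoke independence together with the clique property $\ell_i<r_1$ (so $\{v_k\notin I_1\}\subseteq\{v_k>\ell_i\}$) to recombine the two branches into the unconditional products $\prod_{j<i}\Prob[v_j>\ell_i]$, and finish by observing that the resulting excess is $w_k\bigl(1-\prod_{j<k}\Prob[v_j>\ell_k]\bigr)\ge 0$. The only difference is cosmetic: you split the index set explicitly on $i<k$ versus $i>k$, whereas the paper packages the same case distinction into a single $\max(1-\Ind[k<i],\cdot)$ factor before applying the law of total probability.
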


\begin{proof}
Let $\Ind[A]$ be the indicator variable of the event $A$, i.e., $\Ind[A] = 1$ if $A$ is true, and zero otherwise.
The cost of $I_k I_1 I_{k'} \cdots $ is
\begin{eqnarray*}
 & & w_k + \Prob[v_k \notin I_1] \cdot \sum_{i \in S \setminus I_k} w_i \cdot \prod_{j < i, j \neq k} \Prob[v_j > \ell_i] \\
  & & + \Prob[v_k \in I_1] \cdot \sum_{i \in S \setminus I_k} w_i \cdot \prod_{j < i} \Prob[v_j > \ell_i | v_k \in I_1] \\
 & = & w_k + \Prob[v_k \notin I_1] \cdot \sum_{i \in S \setminus I_k} w_i \cdot \prod_{j < i, j \neq k} \Prob[v_j > \ell_i] \\
  & & + \Prob[v_k \in I_1] \cdot \sum_{i \in S \setminus I_k} w_i \cdot \max(1 - \Ind[k < i], \Prob[v_k > \ell_i | v_k \in I_1]) \cdot \prod_{j < i, j \neq k} \Prob[v_j > \ell_i] \\
 & = & w_k + \Prob[v_k \notin I_1] \cdot \sum_{i \in S \setminus I_k} w_i \cdot \max(1 - \Ind[k < i], \Prob[v_k > \ell_i | v_k \notin I_1]) \cdot \prod_{j < i, j \neq k} \Prob[v_j > \ell_i] \\
  & & + \Prob[v_k \in I_1] \cdot \sum_{i \in S \setminus I_k} w_i \cdot \max(1 - \Ind[k < i], \Prob[v_k > \ell_i | v_k \in I_1]) \cdot \prod_{j < i, j \neq k} \Prob[v_j > \ell_i] \\
 & = & w_k + \sum_{i \in S \setminus I_k} w_i \cdot ( \Prob[v_k \notin I_1] \cdot \max(1 - \Ind[k < i], \Prob[v_k > \ell_i | v_k \notin I_1]) \\
  & & + \Prob[v_k \in I_1] \cdot \max(1 - \Ind[k < i], \Prob[v_k > \ell_i | v_k \in I_1])) \cdot \prod_{j < i, j \neq k} \Prob[v_j > \ell_i] \\
 & = & w_k + \sum_{i \in S \setminus I_k} w_i \cdot \max(1 - \Ind[k < i], \Prob[v_k > \ell_i]) \cdot \prod_{j < i, j \neq k} \Prob[v_j > \ell_i] \\
 & = & w_k + \sum_{i \in S \setminus I_k} w_i \cdot \prod_{j < i} \Prob[v_j > \ell_i],
\end{eqnarray*}
where the first equality holds since the probability distributions are independent, so $\Prob[v_j > \ell_i | v_k \in I_1] = \Prob[v_j > \ell_i]$ unless $j = k$.
The second equality holds since $\ell_i < r_1$ for all~$i$, so $\Prob[v_k > \ell_1 | v_k \notin I_1] = 1$.
The third equality holds since $\Prob[v_k \notin I_1] + \Prob[v_k \in I_i] = 1$ and by Bayes' theorem.

On the other hand, if we swap~$I_k$ and~$I_1$, then the cost is
\begin{eqnarray*}
\sum_{i \in S} w_i \cdot \prod_{j < i} \Prob[v_j > \ell_i]
 & = & w_k \cdot \prod_{j < k} \Prob[v_j > \ell_k] + \sum_{i \in S \setminus I_k} w_i \cdot \prod_{j < i} \Prob[v_j > \ell_i] \\
 & \leq & w_k + \sum_{i \in S \setminus I_k} w_i \cdot \prod_{j < i} \Prob[v_j > \ell_i],
\end{eqnarray*}
since $\prod_{j < k} \Prob[v_j > \ell_k] \leq 1$, and the last value is precisely the cost of the former permutation.
\qed
\end{proof}

This lemma implies, by induction, that the optimum decision tree either first queries~$I_1$, or has $I_1$ at the end of the permutation.
If~$I_1$ is the last interval in the permutation, then it does not have to be queried if all other values fall to its right.
Thus it may be that, in expectation, having~$I_1$ as the last interval is optimal.

We do not know, however, how to efficiently find the best permutation ending in~$I_1$.
Simply considering which interval begins or ends first, or ordering by $\Prob[v_i \in I_1]$ is not enough.
To see this, consider the following two instances with uniform costs and uniform probabilities.
In the first, $I_1 = (0, 100)$, $I_2 = (5, 305)$ and $I_3 = (6, 220)$; the best permutation is $I_2, I_3, I_1$ and has cost $2.594689$.
If we just extend $I_2$ a bit to the right, making $I_2 = (5, 405)$, then the best permutation is $I_3, I_2, I_1$, whose cost is $2.550467$.

If there was a way to determine the relative order in the best permutation between two intervals $I_j, I_k \neq I_1$, simply by comparing some value not depending on the order of the remaining intervals (for example, by comparing the cost of $I_j I_k I_1 \cdots $ and $I_k I_j I_1 \cdots$), then we could find the best permutation easily.
Unfortunately, the ordering of the permutations is not always consistent, i.e., given a permutation, consider what happens if we swap~$I_j$ and~$I_k$: it is not always best to have~$I_j$ before~$I_k$, or~$I_k$ before~$I_j$.
Consider intervals $I_1 = (0, 1000)$, $I_2 = (3, 94439)$, $I_3 = (8, 6924)$, and $I_4 = (9, 2493)$, with uniform query cost and uniform probability distributions.
The best permutation is $I_4, I_3, I_2, I_1$, and the costs of the permutations ending in~$I_1$ are as follows.
Note that it is sometimes better that~$I_2$ comes before~$I_3$, and sometimes the opposite.
\begin{eqnarray*}
\cost(4, 2, 3, 1) = 3.48611 \quad \cost(2, 4, 3, 1) = 3.48715 \quad \cost(2, 3, 4, 1) = 3.48889 \\
\cost(4, 3, 2, 1) = 3.48593 \quad \cost(3, 4, 2, 1) = 3.48770 \quad \cost(3, 2, 4, 1) = 3.48859
\end{eqnarray*}
This issue also seems to preclude greedy and dynamic programming algorithms from succeeding.
It seems that it is not possible to find an optimal substructure, since the ordering is not always consistent among subproblems and the whole problem.
We have implemented various heuristics and performed experiments on random instances, and could always find instances in which the optimum was missed, even for uniform query costs and uniform probabilities.

Another reason to expect hardness is that the following similar problem is NP-hard~\cite{goel06optimizationprobes}.
Given stochastic uncertainty intervals $I_1, \ldots, I_n$, costs $w_1, \ldots, w_n$, and a query budget~$C$, find a set $S \subseteq \{1, \ldots, n\}$ with $w(S) \leq C$ that minimizes~$\Exp[\min_{i \in S} v_i]$.

Note that the decision version of our problem (whether there is a decision tree with expected cost $\leq \alpha$) is in NP as we can represent a decision tree using linear space (by a permutation of the intervals) and compute its cost in polynomial time.

\subsection{Approximation Algorithms}

Good approximation algorithms have been proposed for the adaptive online version with adversarial inputs~\cite{kahan91queries}.
If query costs are uniform, then first querying~$I_1$ costs at most $\opt + 1$, which yields a factor $1 + 1/d_1$, where~$d_1$ is the degree of~$I_1$ in the interval graph.
For arbitrary costs, there is a randomized $1.5$-approximation algorithm using weighted probabilities in the two strategies stated in Lemma~\ref{lemma:minoff}.
Those results apply to the stochastic version of the problem simply by linearity of expectation.

\begin{theorem}
\label{teo:mindet}
The stochastic minimum problem with uncertainty admits a $(1 + 1/d_1)$-approximation for uniform query costs, and a randomized $1.5$-approximation for arbitrary costs.
\end{theorem}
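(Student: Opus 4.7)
The plan is to lift the adversarial-input guarantees of Kahan~\cite{kahan91queries} to the stochastic setting via a per-realization argument combined with linearity of expectation. The key intermediate object is the quantity $\Exp_{\vec{v}}[\opt(\vec{v})]$, where $\opt(\vec{v})$ is the cost of an optimum query set for a fixed realization $\vec{v}=(v_1,\dots,v_n)$ (this is exactly the offline/verification problem characterized in Lemma~\ref{lemma:minoff}).

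First, I would recall (and briefly cite) the two adversarial algorithms from~\cite{kahan91queries}: the deterministic algorithm that always queries the currently leftmost unqueried interval, and the randomized algorithm that mixes the two strategies of Lemma~\ref{lemma:minoff} with carefully chosen weights. Both are adaptive, and both produce, for every realization $\vec{v}$, a feasible query set whose cost $\ALG(\vec{v})$ satisfies $\ALG(\vec{v}) \leq \alpha\cdot \opt(\vec{v})$ per realization, with $\alpha = 1+1/d_1$ in the first case (where the $+1$ coming from Kahan's proof is dominated by $\opt(\vec{v})/d_1$ since the clique of $I_1$ forces $\opt(\vec{v})\geq d_1$ whenever any extra query beyond $I_1$ is needed) and with $\Exp_{\mathrm{coins}}[\ALG(\vec{v})] \leq \tfrac{3}{2}\opt(\vec{v})$ in the second.

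Next I would apply the following simple but crucial observation: for any adaptive decision tree $T$ (in particular, for an optimum stochastic tree $T^*$), the set of intervals that $T$ queries on realization $\vec{v}$ must be feasible for $\vec{v}$, hence its cost is at least $\opt(\vec{v})$. Taking expectation over $\vec{v}$,
\begin{displaymath}
\Exp_{\vec{v}}[\cost(T^*,\vec{v})] \;\geq\; \Exp_{\vec{v}}[\opt(\vec{v})].
\end{displaymath}

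Combining the two ingredients, for the deterministic algorithm $A$ we get
\begin{displaymath}
\Exp_{\vec{v}}[\cost(A,\vec{v})] \;\leq\; (1+1/d_1)\cdot \Exp_{\vec{v}}[\opt(\vec{v})] \;\leq\; (1+1/d_1)\cdot \Exp_{\vec{v}}[\cost(T^*,\vec{v})],
\end{displaymath}
which is exactly the claimed approximation factor. For the randomized algorithm, the same chain of inequalities is first written for each fixed $\vec{v}$ with the extra expectation over the coin tosses, and then an application of Fubini/linearity of expectation over $\vec{v}$ and the internal randomness yields the $3/2$ factor. The main (mild) obstacle is simply being careful to distinguish the three nested notions of optimum -- optimum query set per realization, optimum adaptive stochastic decision tree, and adversarial competitive optimum -- and verifying that Kahan's per-realization guarantees indeed hold without any assumption on the distribution beyond independence, so that no extra conditions need to be imposed on the inputs considered in Section~\ref{sec:minimum}.
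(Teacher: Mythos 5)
Your proposal is correct and follows essentially the same route as the paper's (one-line) proof: the adversarial algorithms of Kahan give a per-realization bound $\ALG(\vec v) \le \alpha \cdot \opt(\vec v)$, while any adaptive decision tree must pay at least $\opt(\vec v)$ on each realization, so linearity of expectation gives the factor $\alpha$ against the best stochastic tree. Your write-up is a more detailed unpacking of the paper's terse ``for any path between the root and a leaf the result holds in the worst case, so it holds in expectation,'' including the useful explicit observation that $\Exp_{\vec v}[\cost(T^*,\vec v)] \ge \Exp_{\vec v}[\opt(\vec v)]$; nothing is missing.
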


\begin{proof}
For any path between the root and a leaf in the decision tree, the result holds in the worst case, so it holds in expectation.
\qed
\end{proof}

Those results have matching lower bounds for the adaptive online setting with adversarial inputs, and for arbitrary query costs there is a deterministic lower bound of~$2$.
We show that the stochastic assumption can be used to beat those lower bounds for arbitrary costs.
First, the randomized $1.5$-approximation algorithm can be derandomized, simply by choosing which strategy has smaller expected query cost: either first querying~$I_1$, or first querying all other intervals and if necessary querying~$I_1$.
We know how to calculate both expected query costs in time $\Oh(n^2)$; the latter is
$\sum_{i > 1} w_i + w_1 \cdot \left(1 - \prod_{i > 1} \Prob[v_i > r_1] \right)$.

\begin{theorem}
\label{teo:beatrandmin}
There is a deterministic $1.5$-approximation algorithm for the stochastic minimum problem with uncertainty with arbitrary query costs.
\end{theorem}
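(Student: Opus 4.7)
The plan is to derandomize Kahan's randomized algorithm directly, exploiting the fact that under the stochastic assumption we can compute exactly the expected cost of each of its two underlying deterministic strategies. By Lemma~\ref{lemma:minoff}, an optimum query set has one of two shapes, and Kahan's algorithm is a biased coin flip between the two corresponding deterministic strategies: strategy $A$ first queries $I_1$ and then cascades through leftmost intervals as described just before Figure~\ref{fig:minfirst}, and strategy $B$ first queries all of $I_2, \ldots, I_n$ and queries $I_1$ only if every observed $v_i$ falls to the right of~$r_1$.

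First, I would record the closed-form expected costs already derived in the preceding discussion, namely
\begin{equation*}
c_A \;=\; \sum_{i=1}^n w_i \cdot \prod_{j < i} \Prob[v_j > \ell_i],
\qquad
c_B \;=\; \sum_{i>1} w_i + w_1 \cdot \Bigl(1 - \prod_{i > 1} \Prob[v_i > r_1] \Bigr),
\end{equation*}
and note that both quantities can be evaluated in $\Oh(n^2)$ time using independence together with constant-time access to the CDFs (assumption~(2) at the beginning of Section~\ref{sec:sorting}). The deterministic algorithm simply computes $c_A$ and $c_B$ and executes whichever of strategies $A, B$ attains the minimum.

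Second, I would deduce the approximation guarantee from Theorem~\ref{teo:mindet}. That theorem gives the randomized $1.5$-approximation pathwise, hence in particular it holds in expectation over any stochastic realization of $v_1, \ldots, v_n$; writing $p^* \in [0,1]$ for the probability with which Kahan's scheme selects strategy $A$, linearity of expectation yields $p^* \cdot c_A + (1 - p^*) \cdot c_B \leq 1.5 \cdot \Exp[\opt]$. Since $\min(c_A, c_B) \leq p^* c_A + (1 - p^*) c_B$ for every $p^* \in [0,1]$, the deterministic algorithm also has expected query cost at most $1.5 \cdot \Exp[\opt]$.

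There is no real obstacle here beyond bookkeeping; the only subtle point is to invoke Theorem~\ref{teo:mindet} in the right way, namely as a statement about the randomized convex combination of strategies $A$ and $B$ rather than as a black-box competitive bound, so that the inequality $\min(c_A,c_B) \leq p^* c_A + (1-p^*)c_B$ transfers the guarantee to the deterministic choice. The rest is just verifying that the two formulas for $c_A$ and $c_B$ are evaluable in polynomial time, which was already established.
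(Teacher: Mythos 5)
Your proof is correct, but it takes a genuinely different route from the paper's. You derandomize by pure convexity: Theorem~\ref{teo:mindet} gives, for the weighted coin-flip scheme, a pathwise bound $p^* C_A(V) + (1-p^*) C_B(V) \leq 1.5\,\opt(V)$ for every realization $V$; taking expectations yields $p^* c_A + (1-p^*) c_B \leq 1.5\,\Exp[\opt]$, and since the minimum of two numbers is at most any convex combination of them, the deterministic rule of executing $\arg\min(c_A, c_B)$ inherits the same bound. This is a clean two-line argument once the closed-form expressions for $c_A, c_B$ are in hand, and the inequality $\Exp[\opt] \leq \opt^*$ finishes it. The paper's proof instead works directly with the lower bound $\opt^* \geq \Ccal_1 + \rho\, w(R)$, where $\rho$ is the probability that option~(b) of Lemma~\ref{lemma:minoff} is optimal and $\Ccal_1$ is the contribution of the remaining realizations; it then bounds each of $\ALG_1, \ALG_R$ separately against $\opt^*$ (obtaining $\ALG_1 \leq \opt^* + \frac{\rho}{1-\rho}\Ccal_1$ and $\ALG_R \leq \opt^* + (1-\rho) w(R)$) and applies an AM--GM step on the minimum. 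What the paper's approach buys is an explicit, self-contained derivation that makes the slack visible (it is exactly what is improved in Theorem~\ref{teo:newrandmin}); what your approach buys is generality and brevity, since it works for any randomized scheme that is a fixed mixture over finitely many deterministic strategies whose expected stochastic costs can be computed, and it never needs to identify or optimize the mixing probability $p^*$. One small caveat worth making explicit in a write-up: your argument relies on the mixing probability $p^*$ being realization-independent (chosen from the weights alone), which is indeed how the randomized algorithm of Theorem~\ref{teo:mindet} is structured; if $p^*$ could depend on the realization, the convexity step would not transfer the pathwise guarantee to $\min(c_A, c_B)$.
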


\begin{proof}
Let~$\Vcal$ be the set of realizations of the values, and assume~$\Vcal$ is finite.
(Otherwise finiteness can be attained by grouping realizations into equivalence classes based on the partition into regions.)
For each $V \in \Vcal$, let~$C_1(V)$ be the random variable denoting the cost of first querying~$I_1$, then querying~$I_2$ if $v_1 \in I_2$, and so on, and let $w(R) = \sum_{i > 1} w_i$.
We partition~$\Vcal$ in sets~$\Vcal_1$ and~$\Vcal_R$, where $V \in \Vcal_1$ if $\opt(V) = C_1(V)$, and $V \in \Vcal_R$ if $\opt(V) = w(R)$; note that Lemma~\ref{lemma:minoff} guarantees that this is indeed a partition.

Let $\rho = \Prob[V \in \Vcal_R]$ and $\Ccal_1 = \sum_{V \in \Vcal_1} C_1(V) \cdot \Prob[V]$.
If~$\opt^*$ is the expected query cost of the best permutation for the stochastic problem, then
\begin{eqnarray*}
\opt^* & \geq & \sum_{V \in \Vcal} \opt(V) \cdot \Prob[V] \\
 & = & \sum_{V \in \Vcal_1} C_1(V) \cdot \Prob[V] + \sum_{V \in \Vcal_R} w(R) \cdot \Prob[V] \\
 & = & \Ccal_1 + \rho \cdot w(R),
\end{eqnarray*}
where the inequality holds by bounding~$\opt^*$ via a ``fractional'' decision tree.

Now consider the solution for the stochastic problem obtained by the algorithm.
Let~$\ALG_1$ be the cost of first querying~$I_1$, and let~$\ALG_R$ be the cost of first querying all other intervals.
Then
\begin{displaymath}
 \ALG_1 \leq \Ccal_1 + \rho \cdot (w_1 + w(R)) \leq \Ccal_1 + \rho \cdot w(R) + \rho \cdot \frac{\Ccal_1}{1 - \rho} \leq \opt^* + \frac{\rho}{1 - \rho} \cdot \Ccal_1,
\end{displaymath}
where the second inequality holds because $\Prob[V \in \Vcal_1] = 1 - \rho$ and $C_1(V) \geq w_1$ for any~$V$.
On the other hand,
\begin{displaymath}
 \ALG_R \leq (1 - \rho) \cdot (w(R) + w_1) + \rho \cdot w(R) \leq w(R) + \Ccal_1 \leq \opt^* + (1 - \rho) \cdot w(R).
\end{displaymath}
The expected query cost of the algorithm is
\begin{eqnarray*}
\min(\ALG_1, \ALG_R) & \leq & \opt^* + \min \left(\frac{\rho}{1 - \rho} \cdot \Ccal_1 , (1 - \rho) \cdot w(R) \right) \\
 & \leq & \opt^* + \sqrt{\frac{\rho}{1 - \rho} \cdot \Ccal_1 \cdot (1 - \rho) \cdot w(R) } \\
 & = & \opt^* + \sqrt{\Ccal_1 \cdot \rho \cdot w(R)} \\
 & \leq & \opt^* + \frac{\Ccal_1 + \rho \cdot w(R)}{2} \\
 & \leq & \frac{3}{2} \cdot \opt^*,
\end{eqnarray*}
where the second and third inequalities hold by the properties of the geometric mean.
\qed
\end{proof}

Ahead in Theorem~\ref{teo:newrandmin} we propose a more refined algorithm, that we analyze according to the expected approximation ratio.
In the proof above, instead, we bounded the ratio of the expected query cost and the expected optimum query cost (the latter corresponding to the fractional decision tree we use as lower bound).
Since there is no fixed relation between $\Exp[A]/\Exp[B]$ and $\Exp[A/B]$ for random variables $A, B$, we cannot compare the results of Theorems~\ref{teo:beatrandmin} and~\ref{teo:newrandmin}.
To circumvent that, we prove the following proposition.
The example used in the proof was proposed for a similar result in \cite{bampis20meetings}, regarding the ratio of the expected query cost and the expected optimum query cost.

\begin{proposition}
For every $\varepsilon \in (0, 1/2)$, there is an instance for which the algorithm of Theorem~\ref{teo:beatrandmin} has expected approximation ratio $3/2 - \varepsilon$.
\end{proposition}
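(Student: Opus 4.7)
The plan is to exhibit a two-interval instance tuned to match the desired ratio. Take $I_1 = (0,1)$ with query cost $w_1 = 1/(2-2\varepsilon)$ and $I_2 = (\alpha, M)$ with query cost $w_2 = 1$, for any $\alpha \in (0,1)$ and $M > 1$. Let $v_1$ have total mass $1/2$ on each of $(0,\alpha)$ and $(\alpha,1)$ (continuously distributed within each region), and let $v_2$ have mass $\eta$ on $(\alpha,1)$ and mass $1-\eta$ on $(1,M)$, for some auxiliary parameter $\eta \in (0,\varepsilon)$, again with continuous densities within each region. This makes assumptions~(1)--(3) of Section~\ref{sec:sorting} hold.

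The first step is to verify that the algorithm of Theorem~\ref{teo:beatrandmin} chooses the strategy $\ALG_R$, which queries $I_2$ first and then queries $I_1$ only when $v_2 < 1$. A direct computation using Lemma~\ref{lemma:minoff} gives $\Exp[\ALG_R] = w_2 + \eta w_1$ and $\Exp[\ALG_1] = w_1 + w_2/2$; the hypothesis $\eta < \varepsilon$ combined with the choice $w_1 = 1/(2-2\varepsilon) < w_2$ yields $\Exp[\ALG_R] < \Exp[\ALG_1]$, as needed. The second step is to partition the sample space into the four cases determined by the signs of $v_1 - \alpha$ and $v_2 - 1$ and evaluate the per-realization ratio $\ALG_R/\opt$ in each. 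When $v_1 > \alpha$, both $\ALG_R$ and $\opt$ turn out to have the same cost, yielding ratio $1$; when $v_1 < \alpha$ and $v_2 > 1$, we have $\opt = w_1$ (strategy~(a) without cascade) but $\ALG_R = w_2$, giving ratio $2 - 2\varepsilon$; when $v_1 < \alpha$ and $v_2 < 1$, we have $\opt = w_1$ but $\ALG_R = w_1 + w_2$, giving ratio $3 - 2\varepsilon$. The four probabilities are $\tfrac{1}{2}(1-\eta)$, $\tfrac{1}{2}\eta$, $\tfrac{1}{2}(1-\eta)$, $\tfrac{1}{2}\eta$; weighting and summing, the cross terms collapse into the clean identity
\[
\Exp[\ALG_R/\opt] \;=\; \tfrac{3}{2} - \varepsilon + \tfrac{\eta}{2} \;\geq\; \tfrac{3}{2} - \varepsilon.
\]

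The hard part is reconciling two opposing constraints in the construction: to push the expected ratio up one would like $v_2$ to lie in $(1,M)$ almost surely, so that $\ALG_R$ rarely pays the additional cost $w_1$, while $\opt$ still pays only $w_1 < w_2$ on the half of outcomes where $v_1 < \alpha$; but setting $\eta = 0$ would violate assumption~(3) at the left endpoint of $I_2$ (and would also make tie-breaking in the algorithm's strategy selection ambiguous). The parameter $\eta > 0$ resolves both issues, and since the resulting perturbation contributes the nonnegative additive term $\eta/2$ to the expected ratio, the bound survives for any $\eta \in (0,\varepsilon)$.
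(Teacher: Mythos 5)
Your proof is correct and takes a genuinely different and simpler route than the paper's. The paper constructs a three-interval instance with $w_1=w_3=k=(1-2\varepsilon)/\varepsilon^2$, $w_2=1$, and carefully tuned probability masses so that the deterministic algorithm is pushed to choose the strategy of querying $I_1$ and cascading; it then computes the expected approximation ratio of that branch to be exactly $3/2-\varepsilon$. You instead build a two-interval instance in which the algorithm is pushed (via $\eta<\varepsilon$) to choose the other branch, namely querying $R=\{I_2\}$ first, and then $I_1$ only if $v_2$ hits $I_1$. I checked each ingredient: $\Exp[\ALG_1]=w_1+w_2/2$ and $\Exp[\ALG_R]=w_2+\eta w_1$ follow from the formulas preceding Theorem~\ref{teo:beatrandmin}, and $\Exp[\ALG_R]<\Exp[\ALG_1]$ is indeed equivalent to $\eta<\varepsilon$; the four per-realization ratios computed via Lemma~\ref{lemma:minoff} (i.e., $1,\;1,\;w_2/w_1=2-2\varepsilon,\;(w_1+w_2)/w_1=3-2\varepsilon$ for the four $(v_1,v_2)$ sign patterns) are all correct, as is the algebraic collapse to $3/2-\varepsilon+\eta/2$. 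You also correctly note that $\eta>0$ is forced by assumption~(3), which is why the additive $\eta/2$ appears.

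What the two constructions buy: the paper hits exactly $3/2-\varepsilon$ at the cost of a less transparent three-interval gadget, while your construction is minimal (two intervals, uniform-looking masses) and yields a ratio strictly exceeding $3/2-\varepsilon$, which is if anything a stronger tightness statement. Since the proposition is read as a lower bound on the expected ratio (its purpose is to show the algorithm of Theorem~\ref{teo:beatrandmin} cannot beat Algorithm~\ref{alg:rand}), achieving a ratio $\geq 3/2-\varepsilon$ suffices; if one insists on equality, a one-line reparametrization (replace $\varepsilon$ by $\varepsilon+\eta/2$ in the definition of $w_1$, choosing $\eta<\min(2\varepsilon,\,1-2\varepsilon)$) recovers it. Your observation that the two branches of the min--choice can be attacked from either side is the genuinely new point relative to the paper's argument.
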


\begin{proof}
Consider three intervals $I_1, I_2, I_3$, with $\ell_1 < \ell_2 < \ell_3 < r_1 < r_2, r_3$.
Let the query costs be $w_1 = w_3 = k$ and $w_2 = 1$, with $k = (1 - 2\varepsilon)/\varepsilon^2$; note that $k > 0$.
Let the probability distributions be the following:
\begin{center}
\begin{tabular}{llll}
  $\Prob[v_1 \leq \ell_2] = \varepsilon$, & $\Prob[\ell_2 < v_1 \leq \ell_3] = 0$, & $\Prob[v_1 > \ell_3] = 1 - \varepsilon$; &\\
  & $\Prob[v_2 \leq \ell_3] = 1/2$, & $\Prob[\ell_3 < v_2 < r_1] = 0$, & $\Prob[v_2 \geq r_1] = 1/2$;\\
  & & $\Prob[v_3 < r_1] = \varepsilon$, & $\Prob[v_3 \geq r_1] = 1 - \varepsilon$.
\end{tabular}
\end{center}
The algorithm will either query $I_1$ and cascade, with expected query cost
\begin{displaymath}
 k + (1-\varepsilon) \cdot \left(1 + \frac{1}{2} \cdot k \right) = \frac{3k}{2} +1 - \varepsilon \cdot \left( \frac{k}{2} + 1 \right)
\end{displaymath}
(the algorithm queries $I_1$, then $I_2$ if $v_1 > \ell_2$, then $I_3$ if $v_1, v_2 > \ell_3$), or query $\{I_2, I_3\}$, and then $I_1$ if necessary, with expected query cost
\begin{displaymath}
 1 + k + k \cdot \left( 1 - \frac{1}{2} \cdot (1 - \varepsilon) \right) = \frac{3k}{2} + 1 + \varepsilon \cdot \frac{k}{2}.
\end{displaymath}
Since the first option is slightly better, the algorithm will choose that strategy.
Thus, the expected approximation ratio is
\begin{align*}
 & \varepsilon \cdot \frac{k}{k} + (1-\varepsilon) \cdot \left( \frac{1}{2} \cdot \varepsilon \cdot \frac{k+1}{k+1} + \frac{1}{2} \cdot \varepsilon \cdot \frac{2k+1}{2k+1} + \frac{1}{2} \cdot (1 - \varepsilon) \cdot \frac{k+1}{k+1} + \frac{1}{2} \cdot (1 - \varepsilon) \cdot \frac{2k+1}{k+1} \right) = \frac{3}{2} - \varepsilon \ .
\end{align*}
\qed
\end{proof}

Next we describe a more refined deterministic algorithm, which attains expected approximation ratio at most $1.4507$; a pseudocode is given in Algorithm~\ref{alg:rand}.
The idea is, in some sense, to try to decide between first querying~$I_1$ or~$\mathcal{I} \setminus \{I_1\}$ (as the algorithm above), but in a more cautious way, taking into consideration the relationship between the query costs and the probability distributions.
The algorithm has six parameters $\mu_1, \mu_R, \rho_1, \rho_R, \phi_1, \phi_j \geq 1$, which are described in the analysis and are functions of the input.

Let $R = \Ical \setminus \{I_1\}$.
For a set $G \subseteq R$, let $w(G) = \sum_{i \in G} w_i$.
We say that a set $G \subseteq R$ {\bf hits}~$I_1$ if, for some interval $I_i \in G$, it holds that $v_i \in I_1$; note that $\Prob[G \hits I_1] = 1 - \prod_{i \in G} \Prob[v_i \notin I_1]$.

First, we try to find a set $G \subseteq R$ such that $w(G) \leq \alpha \cdot w(R)$ and $\Prob[G \hits I_1] \geq \alpha \cdot \Prob[R \hits I_1]$, for some $\alpha \in [1/4, 3/4]$.
If we manage to find such~$G$, then we use the parameters $\mu_1, \mu_R$, defined later in the analysis, to decide between two strategies.
If $\mu_1 \leq \mu_R$, then we query~$I_1$ and cascade.
Otherwise, we first query $G$, and if it hits~$I_1$ then we query~$I_1$ and cascade.
If~$G$ does not hit~$I_1$, then we query the remaining of~$R$, and query~$I_1$ if necessary.
The high-level reason why it is better to query~$G$ before querying all of~$R$ is the following: if $R$~hits~$I_1$, then there is a good probability (at least $1/4$) that~$G$ also hits~$I_1$, but the cost of~$G$ is bounded by $3/4$ times the cost of~$R$.

Now we describe how to find~$G$.
If some interval $I_j \in R$ has cost at least $3w(R)/4$ and the probability that $R \setminus \{I_j\}$ hits~$I_1$ is at least $\Prob[R \hits I_1] / 4$, then we clearly can take $G = R \setminus \{I_j\}$.
If every interval in~$R$ has cost less than $3w(R)/4$, then there is a set $G' \subseteq R$ with $w(R)/2 \leq w(G') \leq 3w(R)/4$ (it can be computed greedily).
In this case, let $\beta = w(G') / w(R)$; if $G'$~hits~$I_1$ with probability at least $\beta \cdot \Prob[R \hits I_1]$, then we can take $G = G'$ and $\alpha = \beta$.
Otherwise, the complement $R \setminus G'$ has cost $(1-\beta) \cdot w(R)$ and probability of hitting~$I_1$
\begin{displaymath}
 \Prob[R \setminus G' \hits I_1] \geq \Prob[R  \hits I_1] - \Prob[G' \hits I_1] > (1 - \beta) \cdot \Prob[R \hits I_1].
\end{displaymath}
So we let $G=R\setminus G'$ and $\alpha=1-\beta$.
Note that $\beta\in [1/2,3/4]$ and $(1-\beta) \in [1/4, 1/2]$, so $\alpha\in [1/4,3/4]$.

If we do not manage to find the set~$G$, then note that there is an interval~$I_j$ with $w_j \geq 3w(R)/4$, but the probability of $R \setminus \{I_j\}$ hitting~$I_1$ is less than $\Prob[R \hits I_1] / 4$.
We have two strategies, depending on whether $w_1 \leq 3w(R)/4$:
\begin{enumerate}
 \item If $w_1 \leq 3w(R)/4$, then we make a deterministic choice between querying~$I_1$ and cascading, or querying~$R$ and then~$I_1$ if necessary.
That choice is made with the parameters~$\rho_1, \rho_R$, which are defined later in the analysis.
In this case, we show that there is a high chance that~$I_j$ is part of every feasible query set, and it comprises most of the cost of the instance, so the expected query cost is not far from the expected optimum.
 \item If $w_1 > 3w(R)/4$, then we first query $R \setminus \{I_j\}$.
If $R \setminus \{I_j\}$ hits~$I_1$, then we query~$I_1$, and then~$I_j$ if necessary.
Otherwise, we use the parameters~$\phi_1, \phi_j$ to make a deterministic choice between querying~$I_1$ or~$I_j$, then querying the other if necessary.
The idea here is that the cost of $R \setminus \{I_j\}$ is not relevant compared to~$w_1$ and~$w_j$, so we can query $R \setminus \{I_j\}$ immediately, and the main question becomes whether to query~$I_1$ or~$I_j$.
\end{enumerate}

\begin{algorithm}[!ht]
\KwIn{$(I_1, \ldots, I_n, w, \Prob)$}
$G \leftarrow \emptyset$\;
$R \leftarrow \{I_2, \ldots, I_n\}$\;
\eIf{$\exists j \in R : w_j \geq 3 w(R) / 4$}{
  \lIf{$\Prob[R \setminus \{I_j\} \hits I_1] \geq \Prob[R \hits I_1] / 4$}{$G \leftarrow R \setminus \{I_j\}$}
}{
  \KwLet $G' \subseteq R$ such that $w(R) / 2 \leq w(G') \leq 3w(R) / 4$\;
  \KwLet $\beta = w(G') / w(R)$\;
  \lIf{$\Prob[G' \hits I_1] \geq \beta \cdot \Prob[R \hits I_1]$}{$G \leftarrow G'$}
  \lElse{$G \leftarrow R \setminus G'$}
}
\eIf{$G \neq \emptyset$}{
  \eIf{$\mu_1 \leq \mu_R$\label{line:beginrand1}}{
    query $I_1$ and cascade\;
  }{
    query $G$\;
    \lIf{$G$ hits $I_1$}{query $I_1$ and cascade}
    \lElse{query $R \setminus G$, then query $I_1$ if necessary\label{line:endrand1}}
  }
}{
  \eIf{$w_1 \leq 3w(R)/4$}{
    \lIf{$\rho_1 \leq \rho_R$}{query $I_1$ and cascade\label{line:begindet}}
    \lElse{query $R$, then query $I_1$ if necessary\label{line:enddet}}
  }{
    \KwLet $j \in R$ with $w_j \geq 3 w(R) / 4$\;\label{line:beginrand2}
    query $R \setminus \{I_j\}$\;
    \lIf{$R \setminus \{I_j\}$ hits $I_1$}{query $I_1$, then query $I_j$ if necessary}
    \Else{
      \lIf{$\phi_1 \leq \phi_j$}{query $I_1$, then query $I_j$ if necessary}
      \lElse{query $I_j$, then query $I_1$ if necessary\label{line:endrand2}}
    }
  }
}
\vspace{0.2cm}
\caption{\label{alg:rand} Refined approximation algorithm for the stochastic minimum problem with uncertainty.}
\end{algorithm}

We now analyze the expected approximation factor of this algorithm.

\begin{theorem}
\label{teo:newrandmin}
Algorithm~\ref{alg:rand} has expected approximation factor at most $1.4507$ for the stochastic minimum problem with uncertainty with arbitrary query costs.
\end{theorem}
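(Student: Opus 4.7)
My plan is to track the random variable $\ALG(V)/\opt(V)$ over realizations $V$ of the values, following the framework set up in Theorem~\ref{teo:beatrandmin}. I would first partition the realization space $\Vcal$ into $\Vcal_1$ and $\Vcal_R$ according to which branch of Lemma~\ref{lemma:minoff} gives the optimum: on $\Vcal_1$ we have $\opt(V) = C_1(V) \geq w_1$, while on $\Vcal_R$ we have $\opt(V) = w(R)$ and moreover $v_1$ is the minimum with all $v_i > r_1$ for $i \in R$. Writing $\rho = \Prob[V \in \Vcal_R]$, the algorithm's expected approximation ratio is $\Exp[\ALG(V)/\opt(V)] = \int_{\Vcal_1} (\ALG/C_1) + \int_{\Vcal_R} (\ALG/w(R))$, which I would analyze branch by branch.

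In the ``good $G$'' case (lines \ref{line:beginrand1}--\ref{line:endrand1}), I would define $\mu_1$ as the expected ratio of the ``query $I_1$ and cascade'' strategy and $\mu_R$ as the expected ratio of the ``query $G$ first'' strategy; the algorithm picks the smaller and trivially achieves $\min(\mu_1,\mu_R)$. For the strategy ``query $I_1$ and cascade'', on $\Vcal_1$ the ratio is exactly $1$, and on $\Vcal_R$ it is at most $(w_1 + w(R))/w(R)$ when the cascade actually triggers. For the ``query $G$ first'' strategy, on $\Vcal_R$ the ratio is $1$ (since $G\subseteq R$ and $R$ is then fully queried and $I_1$ skipped), and on $\Vcal_1$ I bound the ratio using the key structural guarantees $w(G) \leq \alpha \cdot w(R)$ and $\Prob[G \hits I_1] \geq \alpha\cdot \Prob[R \hits I_1]$, with $\alpha \in [1/4,3/4]$. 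Taking the minimum of $\mu_1$ and $\mu_R$ and using $\min(x,y)\leq \sqrt{xy}$ (or the convex combination that makes the bound tight), the worst case should be a function of $\alpha$ and $\rho$ whose supremum I then numerically minimize by choice of the algorithm — this is where the $1.4507$ bound will actually be produced.

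In the remaining two cases, no good subset $G$ exists, meaning one expensive $I_j$ with $w_j \geq 3w(R)/4$ carries essentially all the probability mass of hitting $I_1$ from $R$. For case (2) ($w_1 \leq 3w(R)/4$), I would parameterize $\rho_1,\rho_R$ as the expected ratios of the two deterministic strategies and exploit that the ``cheap part'' $R\setminus\{I_j\}$ rarely hits $I_1$ (probability $\leq \Prob[R\hits I_1]/4$), so one strategy's overhead is effectively small. For case (3) ($w_1 > 3w(R)/4$), the cost $w(R \setminus \{I_j\})$ is negligible compared to $w_1,w_j$, so after querying $R\setminus\{I_j\}$ the problem reduces essentially to a two-interval instance $\{I_1,I_j\}$, and $\phi_1,\phi_j$ are the expected ratios of the two possible orderings for that residual problem.

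The main obstacle is the bookkeeping and the final numerical optimization: each case produces an upper bound that depends on several quantities ($\alpha$, $\rho$, the conditional probability that the cascade extends, the ratio $w_1/w(R)$, and the probabilities $\Prob[v_i \in I_1]$ for the relevant intervals), and the $1.4507$ constant emerges from balancing the worst-case values across all branches. I would verify that each of the three branches of Algorithm~\ref{alg:rand} individually attains a bound of at most $1.4507$ by checking the function at the extremes of the allowed parameter ranges and at the interior critical points, reusing the $\min \leq \sqrt{\,\cdot\,}$ trick from Theorem~\ref{teo:beatrandmin} where it yields a clean estimate and falling back to a convex-combination bound when it does not.
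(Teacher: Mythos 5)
Your proposal follows essentially the same blueprint as the paper's proof: partition realizations via Lemma~\ref{lemma:minoff}, bound the pointwise ratio $\ALG(V)/\opt(V)$ on each part, analyze each of the algorithm's three branches separately with the branch-specific estimates $(\mu_1,\mu_R)$, $(\rho_1,\rho_R)$, $(\phi_1,\phi_j)$, and finish with a parametric optimization that yields $1+\sqrt{13}/8 < 1.4507$ in the dominant Case~1. The only deviations are cosmetic — the paper conditions on the physical event ``$R$ hits $I_1$'' with an auxiliary probability $p_1$ rather than directly on $\Vcal_1/\Vcal_R$, and it equates $\mu_1=\mu_R$ via monotonicity arguments rather than using the $\min\le\sqrt{\cdot\,}$ shortcut (which does not apply cleanly here, as you anticipate with your ``convex combination'' fallback) — and you would also need to remember that the bound on the ``query~$G$ first'' branch over $\Vcal_1$ must handle the subcase where $R$ misses $I_1$ but $C_1\le w(R)$, where the $(\alpha,13/16)$ machinery is not what gives the bound.
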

\begin{proof}
Throughout the proof, we denote by $C_1$ the random variable consisting of the cost of querying~$I_1$ and cascading.
Let~$k$ be the maximum index such that $w_1 + \ldots w_k \leq w(R)$, and let $p_1 = \Prob[v_1 > \ell_{k+1}] = \Prob[v_1 \in I_{k+1}]$ (cf.\ Prop.~\ref{fact:minI1}).
(Note that $k < n$ because $w_1 > 0$.)
Note that $p_1 = \Prob[C_1 > w(R) | R \mbox{ does not hit } I_1]$: this is because, if we assume that all the values in~$R$ are to the right of~$r_1$, then every interval~$I_i \in R$ with $v_1 \in I_i$ must be queried.
We also point out that~$p_1$ is independent from the choices of the values of the intervals in~$R$.
To simplify notation, let $p_R = \Prob[R \hits I_1]$.

We divide the analysis in three cases: (1) when $G \neq \emptyset$, in which case the algorithm runs Lines~\ref{line:beginrand1}--\ref{line:endrand1}, (2) when $G = \emptyset$ and $w_1 \leq 3w(R)/4$, corresponding to Lines~\ref{line:begindet}--\ref{line:enddet}, and (3) when $G = \emptyset$ but $w_1 > 3w(R)/4$, in which case we run Lines~\ref{line:beginrand2}--\ref{line:endrand2}.

\emph{Case 1}, $G \neq \emptyset$.
We have $w(G) \leq \alpha \cdot w(R)$ and $\Prob[G \hits I_1] \geq \alpha \cdot \Prob[R \hits I_1]$, with $\alpha \in [1/4, 3/4]$.
We divide the analysis in two cases, depending on the behavior of the algorithm.

First, consider the case that the algorithm queries~$I_1$ and cascades.
If~$R$ hits $I_1$, which happens with probability~$p_R$, then the optimal strategy is to query $I_1$ and cascade, as discussed in Section~\ref{sec:minimum}, so we are optimal.
Otherwise, the optimum cost is $\min(w(R), C_1)$, while the algorithm pays $C_1$.
In this case, the approximation ratio is 1 if $C_1\le w(R)$ (which happens with probability $1-p_1$); otherwise (with probability~$p_1$), the ratio is at most $(w_1 + w(R))/w(R) = 1 + w_1/w(R)$.
Thus, the expected approximation ratio is at most
\begin{equation}\label{E:case1e1}
p_R\cdot 1+(1-p_R) \cdot \left(1-p_1 + p_1 \cdot \left(1+\frac{w_1}{w(R)}\right)\right) = 1 + (1-p_R) \cdot p_1\cdot \frac{w_1}{w(R)} =: \mu_1\ .
\end{equation}

Next, suppose we first query~$G$.
Recall that, with probability~$p_R$, $R$ hits $I_1$ and it is optimal to query $I_1$ and cascade.
Let $\gamma = \Prob[G \hits I_1 | R  \hits I_1]$.
If $G$~hits~$I_1$, then an upper bound on the query cost of the algorithm is $w(G) + C_1$, while the optimum is $C_1$. Since $C_1\ge w_1$, the approximation ratio in this case is $(C_1 + w(G))/C_1\le 1+w(G)/w_1\le 1+\alpha w(R)/w_1$.
Otherwise, in the worst case the algorithm pays $w(R) + w_1$, while again, the optimum is at least $w_1$, so the approximation ratio is at most $(w_1+w(R))/w_1=1+w(R)/w_1$.
Thus, the expected approximation ratio conditioned to $R$ hitting~$I_1$ is at most
\begin{align*}
   & \gamma \cdot \left( 1 + \frac{\alpha \cdot w(R)}{w_1} \right) + (1 - \gamma) \cdot \left(1 + \frac{w(R)}{w_1} \right) = 1+(1-\gamma +\gamma\alpha)\cdot \frac{w(R)}{w_1}\\
  &\le  1+(1-\alpha +\alpha^2)\cdot \frac{w(R)}{w_1}\le 1+\frac{13w(R)}{16w_1}\ ,
\end{align*}
where the first inequality holds because $1\ge \gamma\ge \alpha$ (hence $\gamma(\alpha-1)\le \alpha(\alpha-1)$), and the last inequality holds because the maximum for $\alpha^2 + 1-\alpha$ in the interval $[1/4, 3/4]$ is $13/16$.

If $R$ does not hit $I_1$, which happens with probability $1 - p_R$, the  optimum cost is $\min(C_1, w(R))$, while the algorithm pays $w(R)$. Conditioned to this event, the expected approximation ratio is at most $p_1\cdot 1 + (1-p_1)\cdot w(R)/w_1$, by conditioning whether $C_1\ge w(R)$ or not.

Altogether, the expected approximation ratio when the algorithm first queries~$G$ is at most
\begin{align}
&p_R\cdot \left(1 + \frac{13 w(R)}{16w_1}\right) + (1-p_R)\cdot \left(p_1 + (1-p_1)\cdot \frac{w(R)}{w_1}\right)\notag\\
&=1+\frac{13}{16} \cdot z + (1-p_R) \cdot \left(p_1 \cdot \left(1-z\right)+\frac{3}{16} \cdot z - 1\right)=: \mu_R\ ,\label{E:case1e2}
\end{align}
where we denote $z=w(R)/w_1$.

By definition, the expected approximation ratio of the algorithm is at most $\min(\mu_1, \mu_R)$.
Our goal now is to show that this minimum is upper-bounded by the claimed approximation factor.

First, consider the case when $0 < z \le 1$; then, both $\mu_1$ and $\mu_R$ are increasing in $p_1$, so we may let $p_1=1$.
Then the functions to bound are $f_1(z,p_R)=1+(1-p_R)/z$ and $f_2(z,p_R)=1+(13/16) \cdot p_R \cdot z$.
If $p_R\in \{0,1\}$, then $\min(f_1,f_2)=1$, so assume $p_R\in (0,1)$.
In this case, for every fixed $p_R$, it holds that $f_1$ is a decreasing function of $z$, while $f_2$ is an increasing function of $z$.
Thus, for every $p_R$, the $z$ that maximizes $\min(f_1,f_2)$ is given by $f_1(z,p_R)=f_2(z,p_R)$, that is, $z=\sqrt{\frac{16(1-p_R)}{13p_R}}$.
Plugging that value of $z$ back into $f_2$, we obtain
$
1+\sqrt{\frac{13}{16}(p_R-{p_R}^2)},
$
which is maximized at $p_R=1/2$, giving us the bound $1+\sqrt{13}/8 < 1.4507$.

Now assume $z>1$. Note that we may assume $z< 16/3$, as otherwise $\mu_1\le 1+3/16$.
Under such assumption, the coefficient of $1-p_R$ in~(\ref{E:case1e2}) is negative, which means that for every fixed $p_1$ and $z$, $\mu_1$ is a decreasing function of $p_R$, while $\mu_R$ is an increasing function of $p_R$. Thus, for each $z$ and $p_1$, $\min(\mu_1,\mu_R)$ is maximized when $\mu_1=\mu_R$. Solving the last equation for $p_R$ (ignoring the restriction $p_R\in (0,1)$), we get 
\[
1-p_R=\frac{13z/16}{p_1(z-1+1/z)+1-3z/16}\ .
\]
The common value of $\mu_1$ and $\mu_R$ for such $p_R$, plugging the expression above in~(\ref{E:case1e1}), is then
\[
1+\frac{13/16}{z-1+1/z+(1-3z/16)/p_1}\le 1+\frac{13/16}{13z/16 + 1/z}\le 1+\sqrt{13}/8\ ,
\]
where the first inequality holds by letting $p_1 = 1$, which we can do since $z < 16/3$ implies that the left hand side is an increasing function of $p_1$.
The second inequality is obtained by minimizing the denominator by setting its derivative to 0: $13/16 -1/z^2=0$, $z=\sqrt{16/13}$.
Concluding, we see that the expected approximation ratio in Case 1 is always bounded by $1+\sqrt{13}/8< 1.4507$.

\emph{Case 2}, $G = \emptyset$, $w_1 \leq 3w(R)/4$.
In this case we make a deterministic choice between two options: either query~$I_1$ and cascade, or query~$R$ and then query~$I_1$ only when~$R$ hits~$I_1$.
The choice is based on estimates $\rho_1, \rho_R$ of the expected approximation ratio we get in either option.

First, consider the case when we query~$I_1$ and cascade.
With probability~$p_R$, the optimum is to query~$I_1$ and cascade as discussed in Section~\ref{sec:minimum}.
Assuming~$R$ does not hit~$I_1$, with probability\ $1-p_1$ again the optimum is to query~$I_1$ and cascade.
In the remaining case, we have approximation ratio at most $(w_1+w(R))/w(R)$, as the optimum only queries~$R$. Putting these cases together, we get approximation ratio at most
\begin{equation}\label{E:mincomp1}
p_R \cdot 1 + (1 - p_R) \cdot \left[ (1 - p_1) \cdot 1 + p_1 \cdot \frac{w_1+w(R)}{w(R)} \right] = 1 + p_1(1-p_R) \cdot \frac{w_1}{w(R)} =: \rho_1.
\end{equation}

Next, let us estimate the approximation ratio when we choose to first query~$R$.
We divide in several cases to organize the argument.
\begin{enumerate}
  \item With probability $1-p_R$, $R$~does not hit~$I_1$.
  Conditioned on this event, with probability~$p_1$, the optimum queries~$R$ only, otherwise the optimum costs $C_1 \geq w_1$.
  Thus, we have approximation ratio at most
  \begin{displaymath}
    p_1+(1-p_1) \cdot \frac{w(R)}{w_1}.
  \end{displaymath}
  
  \item With probability $p_R$, $R$~hits~$I_1$, so the optimum is to query~$I_1$ and cascade, and the algorithm will query all the intervals.
  Remember that $R$~contains an element~$I_j$ with $w_j \geq 3 w(R)/4$, but
  \begin{displaymath}
   \Prob[R \setminus \{I_j\} \hits I_1] < \Prob[R \hits I_1] /4.
  \end{displaymath}
  Thus
  \begin{eqnarray}
  \Prob[(v_j \in I_1) \land (R \setminus \{I_j\} \mbox{ does not hit }I_1)] & = & \Prob[R \hits I_1] - \Prob[R \setminus \{I_j\} \hits I_1] \notag\\
   & > & 3 \cdot \Prob[R \hits I_1] / 4.\label{eq:probj}
  \end{eqnarray}
  Let $p'_1 = \Prob[v_1 \in I_j]$.
  Again, note that~$p'_1$ is independent from the choices in~$R$.
  Conditioned on the event when $R$~hits~$I_1$, we have the following cases.
  \begin{enumerate}
    \item With probability~$p'_1$, we have that $v_1 \in I_j$.
    Conditioned to this event, we have two subcases.
    \begin{enumerate}
      \item Since we are conditioned to $R$~hitting~$I_1$, by Equation~(\ref{eq:probj}) we have probability at least $3/4$ that $R \setminus \{I_j\}$ does not hit $I_1$ but $v_j \in I_1$.
      Therefore, querying $I_j$ must be part of any feasible query set, since $v_1 \in I_j$ and every interval $I_{j'} \in R \setminus \{I_j\}$ has $v_{j'} > r_1$ (including those with $j' < j$).
      So we have approximation ratio at most
      \begin{displaymath}
        \frac{w_1 + w(R)}{w_1 + w_j} \leq \frac{w_1 + w(R)}{w_1 + 3w(R)/4}.
      \end{displaymath}
      \item With probability at most $1/4$, we use the bound of $(w_1 + w(R))/w_1$.
    \end{enumerate}
    Since $\frac{1}{w_1 + 3w(R)/4} \leq \frac{1}{w_1}$, we have that the ratio conditioned to $v_1 \in I_j$ is at most
    \begin{displaymath}
     (w_1 + w(R)) \cdot \left(\frac{3}{4} \cdot \frac{1}{w_1 + 3w(R)/4} + \frac{1}{4} \cdot \frac{1}{w_1} \right).
    \end{displaymath}
    
    \item With probability $1 - p'_1$, we simply bound the ratio by $(w_1 + w(R))/w_1$.
  \end{enumerate}
  Thus we have that the ratio conditioned to $R$~hitting~$I_1$ is at most
  \begin{displaymath}
    (w_1 + w(R)) \cdot \left[ p'_1 \cdot \left( \frac{3}{4w_1 + 3w(R)} + \frac{1}{4w_1} \right) + \frac{1 - p'_1}{w_1} \right].
  \end{displaymath}
  We then claim that $p'_1 \geq p_1$.
  Suppose by contradiction that $p'_1 < p_1$, then we have that $j > k + 1$.
  By the definition of $k$, it holds that
  \begin{eqnarray*}
  w_1 + \ldots + w_{k+1} & > & w_2 + \ldots + w_n \\
  w_1 & > & w_{k+2} + \ldots + w_j + \ldots + w_n \geq w_j \geq 3w(R)/4,
  \end{eqnarray*}
  a contradiction since we assume that $w_1 \leq 3w(R)/4$.
  Thus, since $\frac{1}{w_1 + 3w(R)/4} \leq \frac{1}{w_1}$, the approximation ratio conditioned to $R$~hitting~$I_1$ is at most
  \begin{displaymath}
    (w_1 + w(R)) \cdot \left[ p_1 \cdot \left( \frac{3}{4w_1 + 3w(R)} + \frac{1}{4w_1} \right) + \frac{1 - p_1}{w_1} \right].
  \end{displaymath}
\end{enumerate}
Summing up, the expected approximation ratio assuming that we first query~$R$ is at most
\begin{eqnarray}
 & & (1-p_R) \cdot \left[p_1 + (1-p_1) \cdot \frac{w(R)}{w_1}\right] + p_R \cdot (w_1 + w(R)) \cdot \left[ p_1 \cdot \left( \frac{3}{4w_1 + 3w(R)} + \frac{1}{4w_1} \right) + \frac{1 - p_1}{w_1} \right] \notag\\
 & = & (1-p_R) \cdot p_1 + p_R +\left(1-p_1+\frac{p_R \cdot p_1}{4}\right) \cdot \frac{w(R)}{w_1} +  \frac{3p_R \cdot p_1}{4} \cdot \frac{w(R)}{4w_1+3 w(R)} \notag\\
 & = & (1-p_R) \cdot p_1 + p_R +\left(1-p_1+\frac{p_R \cdot p_1}{4}\right)z +  \frac{3p_R \cdot p_1}{4} \cdot \frac{z}{3z+4} =: \rho_R, \label{E:mincomp4}
\end{eqnarray}
where $z=w(R)/w_1\ge 4/3$.

Thus, our goal is to bound from above the minimum of the last expression and (\ref{E:mincomp1}). 
Consider their difference:
\[
(1 - p_R - z)(1-p_1) + p_1 \left(\frac{1-p_R}{z} - \frac{p_R z}{4} - \frac{3p_Rz}{4(3z+4)}\right).
\]
Note that $1-p_R-z\le 0$, so the difference is non-positive whenever  the last three terms have a non-positive sum, that is, $p_R \ge \frac{4}{z^2+3z^2/(3z+4)+4}$. That is, for all such values of $p_R$, (\ref{E:mincomp1}) is the smaller one. It is then maximized when $p_1=1$ and $p_R=\frac{4}{z^2+3z^2/(3z+4)+4}$, in which case it equals
\[
1+\frac{z+3z/(3z+4)}{z^2+3z^2/(3z+4)+4}.
\]
Now, it can be shown using standard univariate optimization tools that this function is always bounded by~$1.289$ in $[4/3,\infty)$.

Thus, we can assume that $p_R < \frac{4}{z^2+3z^2/(3z+4)+4}$. Note that (\ref{E:mincomp1}) is an increasing function of $p_1$, while (\ref{E:mincomp4}) is a non-increasing function of $p_1$, since the coefficient
\[
1-p_R -z + \frac{p_Rz}{4} +\frac{3p_Rz}{12z+16} = (1-p_R)(1-z)- p_R \cdot \frac{3z-1}{4} - \frac{p_R}{3z+4}
\]
of $p_1$ is non-positive, for all $z>1$ and $p_R\in [0,1]$. Hence, the minimum of (\ref{E:mincomp1}) and (\ref{E:mincomp4}) is maximized
when they are equal. Thus, we equate them and solve for $p_1$, to find
\[
p_1=\frac{p_R+z-1}{p_R+z-1+(1-p_R)/z - p_Rz/4 - 3p_Rz/(12z+16)}.
\]
Note that $p_1<1$, due to our assumption on $p_R$. Plugging back into (\ref{E:mincomp1}), we get 
\begin{eqnarray}
\notag & & 1+\frac{(1-p_R)(p_R+z-1)}{p_Rz+z^2-z + 1-p_R-p_Rz^2/4-3p_Rz^2/(12z+16)}\\
& = & 1+\frac{(1-p_R)(p_R+z-1)}{A-Bp_R}\label{E:mincomp31}\\
& = & 1+(p_R+z-1)\left(\frac{1}{B} + \frac{1-A/B}{A-Bp_R}\right),\label{E:mincomp32}
\end{eqnarray}
where $A=z^2-z+1\ge 1$, and $B=1+z^2/4+3z^2/(12z+16)-z>0$ (for all $z\ge 1$). Note also that $A>B$ holds for all $z\ge 1$. We can rule out $z>11/5$, as follows. Assuming $z>11/5$, our bound on $p_R$ implies $p_R < 0.392$. Let us set $p_R=0.392$ in the first parentheses in (\ref{E:mincomp32}). Then we obtain a decreasing function of $p_R$, since $1-A/B<0$, which is maximized for $p_R=0$, giving us the following function of $z$: $1+\frac{z-0.608}{z^2-z+1}$. The latter is easily checked to be less than 1.44 for all $z > 11/5$.

Thus, we assume that $z \in [4/3, 11/5]$. It can be checked that the value of $B$ in this interval is bounded by $0.353$. We replace $B=0.353$ and $p_R=1$ in the denominator of (\ref{E:mincomp31}), to get the function
\[
1+\frac{(1-p_R)(z-(1-p_R))}{A-0.353}.
\]
We optimize this for $p_R$. If $z\le 2$, the optimum is $p_R=1-z/2$, giving us the function $1+\frac{z^2/4}{A-0.353}$, while for $z> 2$, the optimal value is $p_R=0$, giving us the function $1+\frac{z-1}{A-0.353}$. Both functions can rather easily be checked to be bounded by $1.41$, for $z \in [4/3, 11/5]$.

\emph{Case 3}, $G = \emptyset$, $w_1 > 3w(R)/4$.
To simplify notation, let us write $R' = R \setminus \{I_j\}$.
Remember that $w_j \geq 3w(R)/4$; this implies $w(R') \leq w(R)/4$, so $w(R') \leq w_j/3$ and $w(R') < w_1/3$.

First let us consider the case that $R'$ hits $I_1$, so the optimum strategy is to query~$I_1$ and cascade.
If~$I_j$ is not queried, then the approximation ratio is at most
\begin{displaymath}
 \frac{w(R') + w_1}{w_1} \leq \frac{4}{3}.
\end{displaymath}
If~$I_j$ is queried in the cascading, then it must be part of every feasible query set, so the approximation ratio is at most
\begin{displaymath}
 \frac{w(R') + w_1 + w_j}{w_1 + w_j} \leq \frac{7}{6}.
\end{displaymath}

Now let us assume that $R'$ does not hit $I_1$.
Let $p'_1 = \Prob[v_1 \in I_j]$ and $p_j = \Prob[v_j \in I_1]$; note that those are independent events.

First, consider the situation in which the algorithm queries~$I_1$ before~$I_j$.
With probability~$p'_1$, the algorithm also queries~$I_j$; if $v_j \in I_1$ (which happens with probability~$p_j$), then both~$I_1$ and~$I_j$ need to be queried to decide which is the minimum, otherwise the optimum strategy is to query~$R$ only, as $C_1\geq w_1 + w_j$ but $w(R) = w_j + w(R') \leq w_j + w_1/3$.
With probability $1 - p'_1$, the algorithm does not query~$I_j$, and the optimum strategy is to query~$I_1$ and cascade if $v_j \in I_1$ or $C_1 \leq w(R)$.
Thus, the approximation ratio is at most
\begin{eqnarray}
  & & p'_1 \cdot \left[ p_j \cdot \frac{w(R') + w_1 + w_j}{w_1 + w_j} + (1-p_j) \cdot \frac{w(R') + w_1 + w_j}{w(R)} \right] \notag\\
  & & + (1 - p'_1) \cdot \left[ p_j \cdot \frac{w(R') + w_1}{C_1} + (1 - p_j) \cdot \frac{w(R') + w_1}{\min(C_1, w(R))} \right] \notag \\
  & \leq & p'_1 \cdot \left[ p_j \cdot \left(1 + \frac{w(R')}{w_1+w_j}\right) + (1-p_j) \cdot \left( 1 + \frac{1}{z} \right) \right] \notag \\
  & & + (1 - p'_1) \cdot \left[ p_j \cdot \left( 1 + \frac{w(R')}{w_1} \right) + (1 - p_j) \cdot \frac{w(R') + w_1}{\min(w_1, w(R))} \right] =: \phi_1, \label{eq:bound1p2}
\end{eqnarray}
where $z = w(R)/w_1$.

Now consider the case when the algorithm queries~$I_j$ before~$I_1$.
With probability~$p_j$, the optimum strategy is to query~$I_1$ and cascade, and the algorithm queries all intervals; also, if $v_1 \in I_j$, then $C_1 \geq w_1 + w_j$.
With probability $1-p_j$, the algorithm does not query~$I_1$; if $v_1 \in I_j$, then the optimum is to query~$R$ only, as $w(R') \leq w_1/3$ (as discussed in the previous paragraph), otherwise the optimum is the minimum between querying~$I_1$ and cascading or querying~$R$ only.
Therefore, the approximation ratio is at most
\begin{align}
 & && p_j \cdot \left[ (1-p'_1) \cdot \frac{w(R') + w_j + w_1}{C_1} + p'_1 \cdot \frac{w(R') + w_j + w_1}{w_1 + w_j} \right] \notag \\
 & && + (1-p_j) \cdot \left[p'_1 \cdot \frac{w(R') + w_j}{w(R)} + (1-p'_1) \cdot \frac{w(R') + w_j}{\min(C_1, w(R))} \right] \notag \\
 & \leq && p_j \cdot \left[ (1-p'_1) (1 + z) + p'_1 \cdot \left(1 + \frac{w(R')}{w_1+w_j}\right) \right] + (1-p_j) \cdot \left[ p'_1 + (1-p'_1) \cdot \frac{w(R)}{\min(w_1, w(R))} \right] =: \phi_j. \label{eq:bound2p2}
\end{align}

We need an upper bound for $\min(\phi_1, \phi_j)$.
We have two cases, depending on whether $w_1 \leq w(R)$.

\emph{Case 3.1}, $w_1 \leq w(R)$.
Note that $z \in [1, 4/3)$, where $z = w(R)/w_1$.
Also remember that $w(R') \leq w_j/3$ and $w(R') < w_1/3$, so $w(R') < (w_1 + w_j)/6$.
In this case, Equation~(\ref{eq:bound1p2}) is at most
\begin{eqnarray}
  &  & p'_1 \cdot \left[ p_j \cdot \frac{7}{6} + (1-p_j) \cdot \left( 1 + \frac{1}{z} \right) \right] + (1 - p'_1) \cdot \left[ p_j \cdot \frac{4}{3} + (1 - p_j) \cdot \frac{4}{3} \right] \notag \\
  & = & p_j \cdot p'_1 \cdot \left(\frac{1}{6} - \frac{1}{z}\right) + p'_1 \cdot \left( \frac{1}{z} - \frac{1}{3}\right) + \frac{4}{3}, \label{eq:bound1zg1}
\end{eqnarray}
which is a function of~$p_j$ that is non-increasing for any fixed value of $p'_1, z$, since $z \in [1, 4/3)$ and $p'_1 \geq 0$.
On its turn, Equation~(\ref{eq:bound2p2}) is at most (recall that $w(R)/w_1=z$)
\begin{eqnarray}
  & & p_j \cdot \left[ (1-p'_1) \cdot (1 + z) + p'_1 \cdot \frac{7}{6} \right] + (1-p_j) \cdot \left[ p'_1 + (1-p'_1) \cdot \frac{w(R)}{w_1} \right] \notag \\
  & = & p_j \cdot \left( 1 - \frac{5p'_1}{6} \right) + z \cdot (1-p'_1) + p'_1, \label{eq:bound2zg1}
\end{eqnarray}
 which is a function of~$p_j$ that is increasing for any fixed value of $p'_1, z$, since $p'_1 \in [0, 1]$.
Since one function is non-increasing and the other is increasing, and both are linear functions of~$p_j$, their minimum is maximized when they are equal.
Equating them, we obtain
\begin{displaymath}
  p_j = \frac{(1-p'_1)(4z-3z^2)+3p'_1}{3(z(1-p'_1)+p'_1)},
\end{displaymath}
and $p_j \in [0, 1]$ because $p'_1 \in [0, 1]$ and $z \in [1, 4/3)$.
Equation~(\ref{eq:bound2zg1}) then becomes
\begin{displaymath}
  \frac{((1-p'_1)(4z-3z^2)+3p'_1)(6-5p'_1)}{18(z(1-p'_1)+p'_1)} + z(1-p'_1)+p'_1 =: \hat{\phi}(p'_1, z),
\end{displaymath}
whose partial derivative on~$z$ is
\begin{displaymath}
  \frac{\partial}{\partial z} \hat{\phi}(p'_1, z) = \frac{p'_1(1-p'_1)(p'_1 + 6p'_1(2-z)+6-3z^2(1-p'_1))}{18(z(1-p'_1)+p'_1)^2} \geq 0,
\end{displaymath}
because $p'_1 \in [0, 1]$ and $z \in [1, 4/3)$.
Thus, for any fixed~$p'_1$, it holds that $\hat{\phi}$~is a non-decreasing function of~$z$, so it attains its maximum at $z$ tending to $4/3$.
Therefore,
\begin{displaymath}
  \min(\phi_1, \phi_j) < \hat{\phi}(p'_1, 4/3) = \frac{32+2p'_1-13{p'_1}^2}{24-6p'_1},
\end{displaymath}
which is maximized when $p'_1 = 4 - 2\sqrt{42/13}$ and the maximum is $17 - 2\sqrt{182/3} < 1.4223$.

\emph{Case 3.2}, $w_1 > w(R)$.
Note that $z \in (0, 1)$, where $z = w(R)/w_1$.
Recalling that $w(R') \leq w(R)/4$ and $w(R') \leq w_j/3$, we have $w(R') < w_1/4$, $w(R') < (w_1+w_j)/7$, and  $(w(R')+w_1)/\min(w_1,w(R))=(w(R')+w_1)/w(R)\le 1/z+1/4$.
In this case, Equation~(\ref{eq:bound1p2}) is at most
\begin{eqnarray}
  & & p'_1 \cdot \left[ p_j \cdot \frac{8}{7} + (1-p_j) \cdot \left( 1 + \frac{1}{z} \right) \right] + (1 - p'_1) \cdot \left[ p_j \cdot \frac{5}{4} + (1 - p_j) \cdot \left( \frac{1}{z} + \frac{1}{4}\right) \right] \notag\\
  & = & \frac{1}{z} \cdot (1-p_j) + p_j \cdot \left(1 - p'_1 \cdot \frac{6}{7} \right) + p'_1 \cdot \frac{3}{4} + \frac{1}{4}.\label{eq:bound2zl10}
\end{eqnarray}
which is a function of~$z$ that is non-increasing for any fixed value of $p'_1, p_j$, since $p_j \leq 1$.
On its turn, Equation~(\ref{eq:bound2p2}) is at most (using $w(R)/\min(w_1,w(R))=1$)
\begin{eqnarray}
  & & p_j \cdot \left[ (1-p'_1) \cdot (1 + z) + p'_1 \cdot \frac{8}{7} \right] + (1-p_j) \notag \\
  & = & z \cdot p_j \cdot (1-p'_1) + p_j \cdot p'_1 \cdot \frac{1}{7} + 1. \label{eq:bound2zl1}
\end{eqnarray}

If $p_j = 0$ then this is optimal, and when $p'_1 = 1$ it can be bounded by $8/7$.
Otherwise we have a function of~$z$ that is increasing for any fixed value of $p'_1, p_j$, since $p'_1 \in [0, 1)$ and $p_j \in (0, 1]$.
Since one function is non-increasing and the other is increasing, their minimum is maximized when they are equal, which happens when
\begin{displaymath}
  p_j = \frac{4 - 3z(1-p'_1)}{4(1-z(1-z)(1-p'_1))} \in [0, 1],
\end{displaymath}
and Equation~(\ref{eq:bound2zl1}) then becomes
\begin{displaymath}
  \frac{(4-3z(1-p'_1))(p'_1+7z(1-p'_1))}{28(1-z(1-z)(1-p'_1))} + 1 =: \tilde{\phi}(p'_1, z).
\end{displaymath}
Note that the denominator of the first summand in $\tilde{\phi}(p'_1, z)$ is minimized w.r.t.\ $z$ at $z=1/2$, hence is at most $21+7p'_1$, and the numerator is non-negative for $z \in (0, 1)$.
Taking factors $3(1-p'_1)$ and $7(1-p'_1)$ out of the first and second brackets (resp.) of the numerator and denoting $a=\frac{4}{3(1-p'_1)}$, $b=\frac{p'_1}{7(1-p'_1)}$, we have:
\[
\tilde{\phi}(p'_1, z)\le (a-z)(b+z)\cdot \frac{3(1-p'_1)^2}{3+p'_1} + 1\ .
\]
For fixed $p'_1$, the right-hand side of this inequality is a parabola with negative coefficient of $z^2$, hence it is maximized when $z=(a-b)/2$, implying
\begin{eqnarray*}
\tilde{\phi}(p'_1, z)\le \left(\frac{a+b}{2}\right)^2\cdot \frac{3(1-p'_1)^2}{3+p'_1} + 1&= &\left(\frac{28+3p'_1}{42(1-p'_1)}\right)^2\cdot \frac{3(1-p'_1)^2}{3+p'_1} + 1=\frac{(28+3p'_1)^2}{588(3+p'_1)}+1\\
&=&\frac{243}{196}+\frac{3p'_1}{196}+\frac{361}{588(p'_1+3)}.
\end{eqnarray*}
The derivative of this last function on $p'_1$ is $\frac{1}{588} \left( 9 - \frac{361}{(p'_1+3)^2} \right)$, which is negative for $p'_1 \in [0, 1)$.
So we have a decreasing function of $p'_1$, which is thus maximized at $p'_1=0$, giving us $\tilde{\phi}(p'_1, z)\le 13/9=1.4\bar 4$.
\qed
\end{proof}

The approximation factor can be improved slightly by choosing a different parameter $\delta \in (0, 1/3]$ such that $\alpha \in [\delta, 1-\delta]$.
Note that it does not work for $\delta \in (1/3, 1/2)$, since in this case we cannot build~$G'$ with $w(R)/2 \leq w(G') \leq (1-\delta) \cdot w(R)$ simply by assuming that every interval in~$R$ has cost less than $(1-\delta) \cdot w(R)$.
Using the analysis of Case~1, this implies that a lower bound on the approximation that can be achieved by this strategy is $1 + \sqrt{7}/6 > 1.4409$, even if not restricted by Cases~2 and~3.

\section{Further Questions}
\label{sec:future}

It would be interesting to find out whether it is possible to implement our dynamic programming algorithm for the sorting problem more efficiently, e.g., in $\Oh(n^4)$ time.
We could also try to extend our approach for sorting so as to handle a dynamic setting, e.g.\ as in~\cite{Busto:2019:MIP:3310435.3310582}, where some intervals can be inserted/deleted from the initial set; updating the dynamic program should be faster than building it again from scratch. 

The main question that we leave open regards the complexity of finding the best decision tree for the minimum problem.
It has proved non-trivial to obtain either a negative or positive result.
It could also be promising to look for better approximation algorithms for particular cases, e.g. if query costs and probability distributions are uniform.

If it is NP-hard to find the best decision tree for the minimum problem, then the same applies to the stochastic versions of the problem of finding the $k$-th smallest value (the generalized median problem)~\cite{feder03medianqueries,kahan91queries} and the minimum spanning tree problem with uncertainty on edge weights~\cite{erlebach14mstverification,erlebach08steiner_uncertainty,megow17mst}.
This holds because the minimum problem is a particular case of both: for the $k$-th smallest value problem it is the case with $k=1$; for the minimum spaning tree problem, if we have two vertices connected by multiple edges, then we want to identify the edge of minimum weight.
It would be interesting to find out whether it is possible to devise polynomial-time algorithms for those problems with better approximation guarantees than the respective best results for the adaptive online problem with adversarial inputs, as we did for the minimum problem.
We believe those problems would require different techniques to the ones we developed here, as they are more intricate.

\paragraph{Acknowledgements}
We would like to thank the reviewers for suggesting many improvements on the original draft.
In particular, one of the reviewers pointed the simplification of the recurrence in Section~\ref{sec:sortingdp}, which led to an improved time complexity regarding the preliminary conference paper.

\bibliographystyle{splncs04}
\bibliography{queries}
\end{document}